\newtheorem{proposition}{Proposition}
\newtheorem{conjecture}{Conjecture}
\newtheorem{corollary}{Corollary}
\newtheorem{remark}{Remark}
\newcommand\1{\mathds{1}}
\newcommand{\hP}{\hat{P}}
\def\E{{\mathbb E}}
\def\S{{\mathcal{S}}}
\newcommand{\bt}{\tilde{\beta}}
\newcommand{\bn}[1]{\beta_{#1}(\nu)}
\newcommand{\py}{\frac{\partial}{\partial y}}
\newcommand{\px}{\frac{\partial}{\partial x}}
\newcommand{\fs}{\textit{fluid stable}}
\newcommand{\fu}{\textit{fluid unstable}}
\newcommand{\no}{\textit{normal}}
\newcommand{\sm}{\bar{s}}
\begin{document}

\title{Markovian queues with Poisson control}
\thanks{A short letter mainly devoted to Conjecture 1 of this paper appeared in \citet{NPR22}.}

\author{R.~Núñez-Queija}
\address[R. Núñez-Queija]{University of Amsterdam, Amsterdam, The Netherlands}
\email{nunezqueija@uva.nl}

\author{B.J.~Prabhu}
\address[B.J.~Prabhu]{LAAS-CNRS, Universit\'e de Toulouse, CNRS, INSA, Toulouse, France}
\email{balakrishna.prabhu@laas.fr}

\author{J.A.C.~Resing}
\address[J.A.C.~Resing]{Eindhoven University of Technology, Eindhoven, The Netherlands}
\thanks{J.A.C.~Resing is the corresponding author}
\email{j.a.c.resing@tue.nl}

\begin{abstract}
We investigate Markovian queues that are examined by a controller at random times determined by a Poisson process. Upon examination, the controller sets the service speed to be equal to the minimum of the current number of customers in the queue and a certain maximum service speed; this service speed prevails until the next examination time. We study the resulting two-dimensional Markov process of queue length and server speed, in particular two regimes with time scale separation, specifically for infinitely frequent and infinitely long examination times. In the intermediate regime the analysis proves to be extremely challenging. To gain further insight into the model dynamics we then analyse two variants of the model in which the controller is just an observer and does not change the speed of the server.
\end{abstract}

\maketitle


\section{Introduction}
Consider a dynamically controlled queue in which the speed of the server can be varied in order to attain a given objective. The optimal policy for such problems usually requires adjustments of the speed based on the current queue-length. For example, it has been shown that the right trade-off between energy consumption and mean sojourn time is achieved by setting the speed to be proportional to the number of customers to the power of a constant \citep{Bansal:2007, Andrew:2010}. These adjustments to the speed have to be made at arbitrary times which may depend upon the dynamics of the queue-length (for example, at arrival instants or departure instants). However, in practice  communicating real-time state information of the queue may not always be feasible or desirable due to communication overheads. Nevertheless, it may be possible to control the speed at, say, Poisson instants which are independent of the state of the queue. These control instants can be seen as moments at which a controller, oblivious of the state, decides to measure the state and take appropriate action.

In this paper, we analyse a queueing model in which the server speed is modified at Poisson instants of rate $\nu$ that are independent of the queue length evolution. Inspired from the policy in \citet{Bansal:2007, Andrew:2010}, it will be assumed that the speed is set to $\min(q, \sm)$ when $q$ is the observed queue length and $\sm$ the maximum server speed. 
Our results can be used to compute the sub-optimality induced in the performance metrics due to control instants being different from the ones prescribed by the optimal policy. We shall call this model the \textit{Poisson controller} model.

Our main focus is on the model with $\sm = \infty$. The analysis of that model is not trivial as we show in the paper and for that reason we use the finite speed model to gain insight into the reasons for that complexity.

\subsection{Contributions}
The Poisson controller model can be described by a two-dimensional Markov process with state given by ($Q(t), S(t)$) where $Q(t)$ denotes the number of customers in the system at time $t$ and $S(t)$ denotes the speed of the server at $t$.

For the analysis, we shall separate the two cases $\sm < \infty$ and $\sm = \infty$ since their treatment is based on different techniques. Moreover, as will be seen later, they give rise to different asymptotic results which are easier to present separately.

 For the infinite maximum speed case (Section \ref{sec:infspeed}), that is $\sm = \infty$, we determine the functional equation whose solution leads to the joint generating function of the steady-state process. The steady-state distribution of the queue-length conditional on the speed is shown to be equivalent to the transient distribution of a queue whose state is reset to the given speed at the control instants. The latter queue was analysed in \citet{cohen1982} in which he gave the generating function for the steady-state queue-length. From this  conditional generating function as well as the observation that the marginal distribution of $Q$ is the same as that of $S$, we obtain a system of linear equations to compute this marginal distribution. Finally, we investigate the limiting behaviour of the steady-state distribution when the rate of the control process, $\nu$, goes to $0$ and to $\infty$. For $\nu\to 0$, it is possible that the queue is unstable for some of the speeds. The queue-length can thus live on two different scales: (i) the fluid scale when the sampled speed is less than the arrival rate; (ii) the normal scale when the sampled speed is greater than the arrival rate.

For finite maximum speed $\sm$ (Section \ref{sec:finspeed}), we resort to the matrix geometric method \citep{latouche1999} to analyze the system. Using a probabilistic interpretation, we obtain an explicit expression for the $R$ matrix. As an illustration, we derive the steady-state probabilities for $\sm=1$. For the general case, we show how to obtain the joint generating function when $\nu\to 0$ and $\nu\to\infty$.

We end the paper with two variants of the model, the $M/M/1$ model and the $M/M/\infty$ model, in which the controller is just an observer and does not change the speed of the server. 
The reason that we look at these variants is that in these models, like in the model with Poisson controller, the two-dimensional Markov process also has jumps to diagonal states. However, contrary to the model with Poisson controller, for these variants an explicit expression of the joint generating function of the steady-state process, keeping track of the current queue length and the last observed queue length in this case, can be obtained by solving the corresponding functional equation.

\subsection{Related work}
Our model is related to queueing systems or Markov chains in random environments \citep{Lovas2021,Liu2021}. In the cited models, the arrival rate or the service rate of the queue depends upon the state of the environment which is a random process. As opposed to this, in our model, the environment (which  is the speed observed by the controller) itself depends upon the queue length since it is set to the value measured at the control instants. The two variables -- the queue-length and the state -- thus influence the dynamics of each other in our model whereas in the classical random environment model it is only the environment that influences the dynamics of queue-length. We also mention \citet{cheung2010} in which a random environment model with both unstable and stable speeds was analysed. They obtained the conditional generating function for the queue-length for the process on both the fluid as well as the normal scale. We obtain this type of results for the Poisson controller model. 

Another related model is the model with workload dependent arrival and service rates in \citet{Bekker04}. In that work, the rates change instantaneously with the state which is not the case for us.
The finite speed Poisson controller model with only two speeds (in Section \ref{sec:finspeed}) is a special case of the queue with service speed adaptations  analysed in \citet{B2008}.

\section{Poisson controller with infinite maximum speed}
\label{sec:infspeed}
Consider a single-server queue to which arrivals occur according to a Poisson process of rate $\lambda$. Each arrival brings with it an exponentially distributed service requirement with mean $1/\mu$. The speed of the server can be dynamically assigned values in the set $\{0,1,2,\hdots,\}$. Adjustments to the speed can be made only at control instants which are assumed to occur
according to a Poisson process of rate $\nu$ independently of the arrivals and the departures. At a control instant, the speed of the server is set equal to the number of customers observed at that instant. Between any two consecutive control instants, the speed of the server remains constant at the value chosen at the earlier control instant.

Let $Q(t)$ denote the number of customers in the system at time $t$ and $S(t) = Q(\tau_t)$, with $\tau_t$ the last control instant at or before time $t$. Thus, the speed at time $t$ is maintained at $Q(\tau_t)$ until the next control instant. We do not specify the initial state $(Q(0), S(0))$ since we are interested in the steady-state behaviour of the system which is independent of the initial state. Furthermore, as our focus is on the number of customers in the system and service times are exponential, the service discipline can be any discipline in which the service order is independent of the actual service times of the customers (e.g., FCFS, LCFS or ROS).

The  process $(Q(t),S(t))_{t\geq 0}$ is a Markov process with transition rates
\begin{equation}
(Q(t), S(t)) \to \left\{ \begin{array}{lcl}
			(Q(t) + 1, S(t)) & \mbox{with rate} & \lambda; \\
			(Q(t) - 1, S(t)) & \mbox{with rate} & \mu S(t); \\
			(Q(t), Q(t)) & \mbox{with rate} & \nu. \\
			\end{array}\right.
\end{equation}
Figure \ref{fig:transition_diag} shows the rate diagram of this Poisson controller model.
 \begin{figure}[!htb]
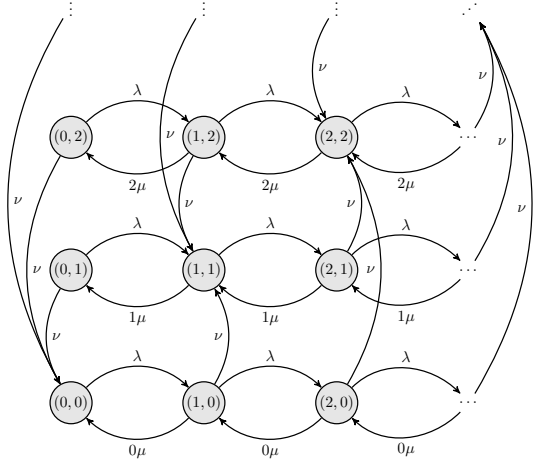

  \centering
  \includestandalone[height=0.5\textwidth]{trans_graph_bd}
    \caption{Rate diagram of the model with Poisson controller with infinite maximum speed.}
    \label{fig:transition_diag}
  \end{figure}

The process $(Q(t), S(t))_{t\geq 0}$ is ergodic for all possible combinations $\lambda>0$, $\mu>0$ and $\nu>0$ (see also Remark \ref{rem:ergodic} in Section 3) and we are interested in the steady-state behaviour of this two-dimensional Markov process. Denote with $\pi_{i,j} = \lim_{t\to\infty} P(Q(t)=i, S(t)=j)$ the steady-state probabilities and let

\begin{equation}
	P(x,y) = \sum_{i\geq 0,j\geq 0} \pi_{i,j} x^i y^j
\end{equation}
be the corresponding joint probability generating function. By $(Q,S)$ we denote a pair of random variables with this joint probability generating function.

\begin{proposition}
$P(x,y)$ is the solution of the functional equation
\begin{equation}
	(\nu + \lambda(1-x))P(x,y) + \mu y \left(1 - \frac{1}{x}\right) \py\left[P(x,y) - P(0,y)\right] = \nu P(xy,1).
\label{eqn:gen_fun}
\end{equation}
\label{prop:gen_fun}
\end{proposition}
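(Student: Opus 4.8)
The plan is to derive the functional equation from the global balance equations of the Markov process, phrased compactly through its infinitesimal generator. Since $(Q(t),S(t))_{t\geq 0}$ is ergodic, in steady state $\E[(\mathcal{G}f)(Q,S)] = 0$ for every bounded test function $f$, where the generator acts as
\begin{equation*}
(\mathcal{G}f)(i,j) = \lambda\big[f(i+1,j)-f(i,j)\big] + \mu j\,\indi{i\geq 1}\big[f(i-1,j)-f(i,j)\big] + \nu\big[f(i,i)-f(i,j)\big].
\end{equation*}
The three terms encode, respectively, an arrival, a departure at the current speed $j$ (only possible when $i\geq 1$), and a control instant that resets the speed to the current queue length $i$. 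Writing the control term as $\nu[f(i,i)-f(i,j)]$ automatically annihilates the self-loops on the diagonal $j=i$, so no separate bookkeeping is needed there.

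First I would substitute the test function $f(i,j)=x^i y^j$, so that $f(i+1,j)-f(i,j)=x^iy^j(x-1)$, $f(i-1,j)-f(i,j)=x^iy^j(1/x-1)$, and $f(i,i)-f(i,j)=(xy)^i-x^iy^j$. Multiplying by $\pi_{i,j}$ and summing over $i,j\geq 0$, the arrival term collapses immediately to $\lambda(x-1)P(x,y)$. For the departure term the factor $\mu j$ is produced by the operator $y\py$, while the constraint $i\geq 1$ forces a subtraction of the boundary column $\{\pi_{0,j}\}_j$; since $\sum_j y^j\pi_{0,j}=P(0,y)$, this contribution becomes $\mu(1/x-1)\,y\py[P(x,y)-P(0,y)]$.

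The step that requires the most care is the control term, $\nu\sum_{i,j}\pi_{i,j}\big[(xy)^i-x^iy^j\big]$. The second piece is simply $-\nu P(x,y)$, but the first piece must be identified correctly: because the reset sends $(i,j)$ to the diagonal state $(i,i)$, the variable $j$ is replaced by $i$, yielding $(xy)^i$ with no residual dependence on the original speed. Summing over $j$ at fixed $i$ therefore recovers the marginal law of the queue length, $\sum_j\pi_{i,j}=\pr{Q=i}$, and hence $\sum_{i,j}\pi_{i,j}(xy)^i=\sum_i\pr{Q=i}(xy)^i=P(xy,1)$. This identification of the reset contribution with the \emph{marginal} generating function evaluated at the product $xy$ is the crux of the derivation, and is precisely the feature that couples the two coordinates.

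Collecting the three contributions and setting their sum to zero gives
\begin{equation*}
\lambda(x-1)P(x,y) + \mu\Big(\tfrac{1}{x}-1\Big)y\py\big[P(x,y)-P(0,y)\big] + \nu P(xy,1) - \nu P(x,y) = 0,
\end{equation*}
which rearranges to \eqref{eqn:gen_fun} after writing $\lambda(x-1)-\nu=-[\nu+\lambda(1-x)]$ and $-\mu(1/x-1)=\mu(1-1/x)$. Throughout, the interchanges of summation and differentiation are legitimate on the open polydisc $|x|<1$, $|y|<1$, where the series defining $P$ converges absolutely; the identity then extends to the closed polydisc by continuity. The only mild technical point is the existence of $\py P$, which holds for $|y|<1$ and suffices, as \eqref{eqn:gen_fun} is an identity between functions analytic there.
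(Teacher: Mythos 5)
Your proof is correct and takes essentially the same approach as the paper: the stationarity condition $\sum_{i,j}\pi_{i,j}(\mathcal{G}f)(i,j)=0$ with test function $f(i,j)=x^iy^j$ is exactly the paper's step of multiplying the balance equations by $x^iy^j$ and summing over all states, merely organized by rows of the generator rather than by columns. The term-by-term bookkeeping (arrival term, departure term with the boundary correction $P(0,y)$, and the reset term collapsing to the marginal $\nu P(xy,1)$) coincides with the paper's derivation.
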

\begin{proof}

From the transition in Fig.~\ref{fig:transition_diag}, the balance equations for the stationary probabilities can be seen to be, for $i \geq 1, j\geq 0$:
\begin{align}
	(\lambda + j\mu + \nu)\pi_{i,j} &= \lambda\pi_{i-1,j} + j\mu \pi_{i+1,j} + \1_{\{i=j\}}\nu\sum_{k}\pi_{j,k} 
 \label{eqn:balpi1}
\end{align}
and, for $i=0, j\geq 0$:
\begin{align}
	(\lambda + \nu)\pi_{i,j} &= j\mu \pi_{i+1,j} + \1_{\{i=j\}}\nu\sum_{k}\pi_{j,k}.
 \label{eqn:balpi2}  
\end{align}

Multiplying both sides of the rate balance equations by $x^iy^j$ and summing
over all possible $i$ and $j$ immediately leads to the equation
\begin{eqnarray*}
&&	(\nu + \lambda)P(x,y) + \mu y \py\left[P(x,y) - P(0,y)\right] \\
&&= \lambda x P(x,y) + \frac{\mu y}{x} \py\left[P(x,y) - P(0,y)\right] + \nu P(xy,1),
\end{eqnarray*}
which can be alternatively written as (\ref{eqn:gen_fun}).
\end{proof}
\begin{corollary}
The marginal distribution of the speed is the same as the marginal distribution of the number of customers in the system. That is,
\[P(1,y) = P(y,1).\]
\label{cor:marg}
\end{corollary}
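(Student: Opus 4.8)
The plan is to read the identity straight off the functional equation (\ref{eqn:gen_fun}) by substituting $x=1$, since Proposition \ref{prop:gen_fun} is available. First I would make the target explicit: $P(1,y)=\sum_{i,j}\pi_{i,j}y^j$ is precisely the probability generating function of the speed $S$, while $P(y,1)=\sum_{i,j}\pi_{i,j}y^i$ is the generating function of the queue length $Q$, so the assertion is exactly the equality of these two marginal generating functions. No new balance equations are needed; everything should follow from the equation already derived.

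Setting $x=1$ in (\ref{eqn:gen_fun}), the coefficient $\lambda(1-x)$ of $P(x,y)$ vanishes, the prefactor $\mu y\left(1-\tfrac{1}{x}\right)$ of the derivative term also vanishes, and the right-hand side $\nu P(xy,1)$ becomes $\nu P(y,1)$. Provided the derivative factor $\py\left[P(x,y)-P(0,y)\right]$ stays finite as $x\to 1$, the middle term drops out and I am left with $\nu P(1,y)=\nu P(y,1)$; dividing by $\nu>0$ then gives the claim.

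The one point requiring care — and the step I would flag as the main obstacle — is justifying that $\mu y\left(1-\tfrac{1}{x}\right)\py\left[P(x,y)-P(0,y)\right]$ genuinely vanishes in the limit $x\to1$, i.e. that the derivative does not blow up and cancel the vanishing factor $\left(1-\tfrac{1}{x}\right)$. For $|y|<1$ this is immediate, since $\py\left[P(x,y)-P(0,y)\right]=\sum_{i\geq 1,\,j}j\,\pi_{i,j}\,x^i y^{j-1}$ converges (uniformly on a neighbourhood of $(1,y_0)$ with $|y_0|<1$, thanks to the $y^{j-1}$ damping), so its limit is finite and the $\left(1-\tfrac{1}{x}\right)$ factor forces the product to $0$. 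Hence $P(1,y)=P(y,1)$ for all $|y|<1$, and by continuity of both convergent power series up to $|y|=1$ the identity extends to the closed disk.

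Finally, I would note the probabilistic reading as a sanity check, though it is not needed for the proof: the speed is the queue length frozen at the last control instant, $S(t)=Q(\tau_t)$, and since the control instants form a Poisson process independent of the queue, PASTA says the queue length sampled at control instants carries the stationary law of $Q$; stationarity of the two-dimensional process $(Q,S)$ then transfers this to the marginal law of $S$. This explains why nothing beyond the single substitution $x=1$ is required.
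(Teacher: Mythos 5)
Your proof is correct and follows exactly the paper's own argument: substituting $x=1$ into the functional equation of Proposition \ref{prop:gen_fun} makes the $\lambda(1-x)$ and $\mu y\left(1-\tfrac{1}{x}\right)$ terms vanish, leaving $\nu P(1,y)=\nu P(y,1)$. Your added care about the derivative factor remaining finite near $x=1$, and the PASTA interpretation as a sanity check, both match (and slightly sharpen) what the paper does.
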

\begin{proof}
Substituting $x=1$ in Prop. \ref{prop:gen_fun} leads to the above corollary.
\end{proof}

An explanation for Cor.~\ref{cor:marg} is that the controller
sees the marginal distribution of $Q$ due to the PASTA property\footnote{
We use that "PASTA" holds for any sequence of Poisson-generated events that may or may not change the system state (see \cite{W1982}). In the literature this property has been coined PASTA because it is mostly used for queueing systems at arrival instants.}.

Since $S$ is set to $Q$ at the control instants, the marginal distribution
of $S$ just after these instants is the same as the marginal distribution of $Q$.
Because the time until the next control instant is independent
of $S$ at a control instant (this time is exponentially distributed
with parameter $\nu$), the marginal distribution of $S$ at an arbitrary
instant is the same as the marginal distribution of $S$ just after a
control instant and hence also the same as the marginal distribution
of $Q$ at an arbitrary instant.

Let $\gamma_i = \sum_j\pi_{i,j}$ be the marginal distribution of the stationary queue-length.
A second consequence of Prop.~\ref{prop:gen_fun} is the following local balance when there are $i$ customers in the queue.
\begin{corollary}
\begin{equation}
	\lambda\gamma_i = \sum_j j\mu\pi_{i+1,j},\, i\geq0.
\end{equation}
\label{cor:avgrate}
\end{corollary}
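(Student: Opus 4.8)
The plan is to derive the identity directly from the functional equation (\ref{eqn:gen_fun}) by specializing it at $y=1$, in the same spirit as Corollary~\ref{cor:marg} was obtained by setting $x=1$. First I would substitute $y=1$ into (\ref{eqn:gen_fun}). On the right-hand side $\nu P(xy,1)$ collapses to $\nu P(x,1)$, while the $\nu P(x,y)$ contribution on the left also becomes $\nu P(x,1)$, so the two $\nu$-terms cancel. Setting the factor $\mu y$ equal to $\mu$ at $y=1$, this leaves the relation
\[
\lambda(1-x)P(x,1) + \mu\left(1-\frac1x\right)\left.\py\bigl[P(x,y)-P(0,y)\bigr]\right|_{y=1}=0.
\]

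Next I would introduce the coefficient notation. By definition $P(x,1)=\sum_{i\ge0}\gamma_i x^i$, and $\left.\py P(x,y)\right|_{y=1}=\sum_{i,j} j\,\pi_{i,j}\,x^i$, while $\left.\py P(0,y)\right|_{y=1}=\sum_j j\,\pi_{0,j}$ is merely the constant term of that series. Both terms in the displayed equation carry a common factor $1-x$ (note that $1-1/x=-(1-x)/x$), so I would factor it out and cancel it, which is legitimate as an identity of formal power series. After multiplying through by $x$ this yields
\[
\lambda x\sum_{i\ge0}\gamma_i x^i=\mu\sum_{i\ge1}\Bigl(\sum_j j\,\pi_{i,j}\Bigr)x^i,
\]
where subtracting the constant $\left.\py P(0,y)\right|_{y=1}$ has exactly removed the $i=0$ term from the right-hand series.

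Finally I would match coefficients of $x^{i+1}$ on both sides: the left-hand side contributes $\lambda\gamma_i$ and the right-hand side contributes $\mu\sum_j j\,\pi_{i+1,j}$, which produces $\lambda\gamma_i=\sum_j j\mu\,\pi_{i+1,j}$ for every $i\ge0$, the claimed corollary. The one point needing care is the cancellation of the factor $1-x$: one should verify that the division leaves a genuine power series rather than introducing a pole at $x=1$, which is automatic here since the entire left-hand side vanishes identically and $P(x,1)$ is finite at $x=1$. I would also note the probabilistic reading that makes the result transparent: because control instants move only $S$ and leave $Q$ unchanged, the only transitions crossing the cut $\{Q\le i\}\leftrightarrow\{Q\ge i+1\}$ are arrivals (rate $\lambda$, contributing $\lambda\gamma_i$) and departures (rate $j\mu$ at speed $j$, contributing $\sum_j j\mu\,\pi_{i+1,j}$), so the corollary is simply the level-crossing balance for the queue-length coordinate.
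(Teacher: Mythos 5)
Your proof is correct and takes essentially the same route as the paper: substitute $y=1$ into the functional equation (\ref{eqn:gen_fun}), cancel the $\nu$-terms and the common factor $1-x$ to obtain $\lambda x P(x,1) = \mu \py\left[P(x,y)-P(0,y)\right]\big\rvert_{y=1}$, and compare coefficients of $x^{i+1}$. Your closing level-crossing remark also matches the paper's stated alternative up- and downcrossings argument, so nothing here diverges from the published proof.
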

\begin{proof}
Substituting $y=1$ in the Prop. \ref{prop:gen_fun}  gives
\begin{equation}
	x\lambda P(x,1) = \mu \py\left[P(x,y) - P(0,y)\right]\Big\rvert_{y=1}.
\label{eq:subst_y=1}
\end{equation}
Comparing the coefficient of $x^i$ on either side we get the claimed result.
\end{proof}
An alternative proof of this result can be obtained using the following up- and
downcrossings argument. The rate at which the number of customers in the
system goes from $i$ to $i+1$ equals the rate at which the number of
customers goes from $i+1$ to $i$.

Finally, we can also conclude that the expected queue-length is larger than in an $M/M/\infty$ queue.
\begin{corollary}
\[
\E[Q] > \rho, \mbox{ with $\rho = \lambda/\mu$}.
\]
\end{corollary}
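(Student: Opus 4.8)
The plan is to derive a clean expression for $\E[Q]$ directly from the functional equation and compare it with $\rho$. Recall from Corollary~\ref{cor:avgrate} that summing the local-balance relations over all $i$ gives the utilization-type identity, but more useful here is to differentiate the generating-function relations at $x=1$, $y=1$. First I would work from equation~(\ref{eq:subst_y=1}), namely $x\lambda P(x,1) = \mu \py[P(x,y)-P(0,y)]|_{y=1}$, and evaluate it at $x=1$. The left side gives $\lambda$ (since $P(1,1)=1$), so I obtain $\lambda = \mu\, \py[P(x,y)-P(0,y)]|_{x=1,y=1} = \mu\,\E[S] - \mu\cdot\py P(0,y)|_{y=1}$. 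Since $\E[S]=\E[Q]$ by Corollary~\ref{cor:marg}, and writing the stray term as a nonnegative correction, this should rearrange into an inequality of the form $\E[Q] = \rho + (\text{something nonnegative})$.

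More precisely, the key step is to interpret the term $\py P(0,y)|_{y=1}$. Note $P(0,y)=\sum_{j}\pi_{0,j}y^j$ encodes the joint law on the boundary $\{Q=0\}$, so $\py P(0,y)|_{y=1} = \sum_j j\,\pi_{0,j} = \E[S\,\indi{Q=0}] \ge 0$, and in fact it is strictly positive because the state $(0,j)$ with $j\ge1$ has positive stationary probability (the chain can be sampled at a positive queue length and then drained to empty before the next control instant, since $\nu>0$ and the process is ergodic). Thus the identity becomes
\begin{equation}
\lambda = \mu\,\E[Q] - \mu\,\E[S\,\indi{Q=0}],
\label{eqn:EQ-identity}
\end{equation}
which rearranges to $\E[Q] = \rho + \E[S\,\indi{Q=0}] > \rho$, the claimed bound.

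The main obstacle is justifying the strict positivity of $\E[S\,\indi{Q=0}]$ rather than merely its nonnegativity, since nonnegativity alone would only give $\E[Q]\ge\rho$. The point is that $\pi_{0,j}>0$ for some $j\ge1$: starting from a control instant that sets $S=j\ge1$ (which occurs with positive probability by ergodicity and the fact that the marginal of $S$ equals that of $Q$, which is not concentrated at $0$), there is positive probability that all $j$-rate departures empty the queue before either an arrival or the next control instant, landing in state $(0,j)$. A second, more routine obstacle is confirming the interchange of differentiation and summation and the finiteness of the relevant derivatives at the boundary of the unit disk; these follow from ergodicity and the finiteness of $\E[Q]$, which I would take for granted given that the stationary distribution exists and has a well-defined generating function analytic near $(1,1)$.
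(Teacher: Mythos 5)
Your proof is correct and follows essentially the same route as the paper: substituting $x=1$ into \eqref{eq:subst_y=1}, using $\E[Q]=\E[S]$ from Corollary~\ref{cor:marg}, and rearranging to $\E[Q]=\rho+\E[S\cdot\indi{Q=0}]$. Your additional argument for the strict positivity of $\E[S\cdot\indi{Q=0}]$ (showing $\pi_{0,j}>0$ for some $j\geq 1$) is a worthwhile detail that the paper's proof leaves implicit.
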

\begin{proof}
Substituting $x=1$ in \eqref{eq:subst_y=1} yields the equation
\[
\lambda =  \mu \left( E(S) - E(S\cdot 1[Q=0])\right),
\]
from which we conclude that
\[
E(S \cdot 1[Q>0]) = \rho,
\]
and, hence,
\[
E[Q] = E[S] = E(S \cdot 1[Q>0]) + E(S\cdot 1[Q=0]) > \rho.
\]
\end{proof}

We were unable to obtain an explicit solution to the functional equation \eqref{eqn:gen_fun}. Next, we provide an alternative way to compute the generating function. This method is not explicit but is amenable to numerical computations. For this, we first compute the distribution of the queue-length conditioned on the speed.
\subsection{The conditional distribution}
\label{ssec:cond_dist}
Define $\sigma_j= \sum_i\pi_{i,j}$ to be the marginal distribution of the stationary server speed. Let 
\[
p_j(i) = \frac{\pi_{i,j}}{\sigma_j}
\]
be the stationary conditional probability of having $i$ customers in the system when the service rate is $j\mu$ and let
$f_j(z)$ be the generating function of this stationary conditional distribution. Furthermore, define 	
\begin{equation}
\beta_j(\nu) = \frac{\lambda+j\mu+\nu - \sqrt{(\lambda+j\mu + \nu)^2 - 4\lambda(j\mu)}}{2\lambda}
\end{equation}

\begin{proposition}
For the generating function $f_j(z)$ of the stationary conditional distribution we have
\begin{equation}
f_0(z) =  \frac{\nu}{\nu+\lambda(1-z)}
\label{eqn:mar_0}
\end{equation}
and, for $j>0$,
\begin{equation}
	f_j(z) = \frac{\nu \bt_j(\nu)\sum_{k=0}^{\infty} c_{k,j} z^k}{\lambda(1- \bt_j(\nu)z)},
	\label{eqn:mar_q}
\end{equation}
where
\begin{equation}
	\bt_j(\nu) = \frac{\lambda\beta_j(\nu)}{j\mu},
\end{equation}
and
\begin{equation}
	c_{k,j} = \begin{cases}
			(1-\beta_j(\nu))^{-1}\beta_j(\nu)^{j} & k = 0; \\
			\beta_j(\nu)^{j-k} & k = 1, \ldots, j; \\
			0 & k > j.
	\end{cases}
\label{eqn:c_kj}
\end{equation}
\label{prop:cond_distr}
\end{proposition}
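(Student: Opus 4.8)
The plan is to bypass the full two-dimensional functional equation and instead derive a self-contained equation for each conditional generating function $f_j(z)$ directly from the balance equations \eqref{eqn:balpi1}--\eqref{eqn:balpi2}. First I would divide those equations by $\sigma_j$ and rewrite them in terms of $p_j(i)=\pi_{i,j}/\sigma_j$. The only coupling between different speed-slices enters through the source term $\1_{\{i=j\}}\nu\gamma_j$; invoking Corollary~\ref{cor:marg}, which yields $\sigma_j=\gamma_j$, this term collapses to $\1_{\{i=j\}}\nu$. Hence for fixed $j$ the conditional law obeys a birth--death recursion with birth rate $\lambda$, death rate $j\mu$, a killing rate $\nu$ on every level, and a unit injection of probability at level $i=j$ (mirroring that speed $j$ is set precisely when a control instant finds the queue at level $j$). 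This is exactly the stationary law of a queue reset to state $j$ at the control epochs, the probabilistic picture described in the introduction and treated by \citet{cohen1982}. For $j=0$ the recursion degenerates to $(\lambda+\nu)p_0(i)=\lambda p_0(i-1)$ for $i\ge1$ with $(\lambda+\nu)p_0(0)=\nu$, so $p_0$ is geometric with ratio $\lambda/(\lambda+\nu)$ and summing the series gives \eqref{eqn:mar_0} at once.

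For $j>0$ I would multiply the recursion by $z^i$, sum over $i$, and use the boundary relation $(\lambda+\nu)p_j(0)=j\mu p_j(1)$ to eliminate $p_j(1)$. After routine bookkeeping everything collapses into the single scalar equation
\begin{equation*}
\bigl(-\lambda z^2+(\lambda+j\mu+\nu)z-j\mu\bigr)\,f_j(z)=j\mu\,p_j(0)\,(z-1)+\nu z^{j+1},
\end{equation*}
in which the sole remaining unknown is $p_j(0)$.

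The quadratic on the left factors as $-\lambda\bigl(z-\beta_j(\nu)\bigr)(z-\hat\beta_j)$, where $\beta_j(\nu)$ is the small root appearing in the definition and $\hat\beta_j=j\mu/\bigl(\lambda\beta_j(\nu)\bigr)=1/\bt_j(\nu)$ is the large one; since the left side equals $\nu>0$ at $z=1$, one reads off $\beta_j(\nu)<1<\hat\beta_j$. The key step is then to pin down $p_j(0)$ by an analyticity (kernel) argument: because $f_j$ is a probability generating function with radius of convergence $\hat\beta_j>1$ it is finite at the interior root $z=\beta_j(\nu)$, so the right-hand side must vanish there, which forces $p_j(0)=\nu\,\beta_j(\nu)^{\,j+1}/\bigl(j\mu(1-\beta_j(\nu))\bigr)$. (Normalization $f_j(1)=1$ is then automatic, since numerator and denominator both equal $\nu$ at $z=1$.)

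With $p_j(0)$ fixed, the factor $\bigl(z-\beta_j(\nu)\bigr)$ cancels between numerator and denominator, leaving $f_j(z)=M(z)/\bigl(\lambda(\hat\beta_j-z)\bigr)$ with $M$ a polynomial of degree $j$; rewriting $\lambda(\hat\beta_j-z)=\bigl(j\mu/\beta_j(\nu)\bigr)\bigl(1-\bt_j(\nu)z\bigr)$ already reproduces the denominator and the prefactor of \eqref{eqn:mar_q}, and forces $M(z)=\nu\sum_k c_{k,j}z^k$. The one genuinely computational step, and the only place I expect friction, is carrying out the division of $N(z)=j\mu p_j(0)(z-1)+\nu z^{j+1}$ by $\bigl(z-\beta_j(\nu)\bigr)$ and matching the quotient to the ansatz \eqref{eqn:c_kj}. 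I would verify it by multiplying back: expanding $\bigl(z-\beta_j(\nu)\bigr)\sum_{k=0}^j c_{k,j}z^k$, the interior coefficients telescope to zero since $c_{k-1,j}-\beta_j(\nu)c_{k,j}=0$ for $2\le k\le j$, the $z^{j+1}$ coefficient is $c_{j,j}=1$, and the special values $c_{0,j}$ and $c_{1,j}$ reproduce exactly the constant and linear terms carried by $j\mu p_j(0)(z-1)$. This confirms \eqref{eqn:c_kj} and hence \eqref{eqn:mar_q}.
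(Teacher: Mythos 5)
Your proposal is correct and follows essentially the same route as the paper's own proof: divide the balance equations by $\sigma_j$ and invoke Corollary~\ref{cor:marg} to decouple the speed slices, derive the scalar kernel equation $\left[\lambda z^2 - (\lambda+\nu+j\mu)z + j\mu\right]f_j(z) = j\mu(1-z)p_j(0) - \nu z^{j+1}$, pin down $p_j(0)$ by requiring the right-hand side to vanish at the kernel root $\beta_j(\nu)<1$, and then match the resulting rational function to the stated form (your multiply-back verification of the coefficients $c_{k,j}$ is just a different presentation of the paper's direct factorization). One cosmetic point: you neither need nor can know a priori that the radius of convergence of $f_j$ is $1/\bt_j(\nu)$; the kernel step only requires that $f_j$ be finite at $\beta_j(\nu)<1$, which is immediate because $f_j$ is a probability generating function --- exactly what the paper uses implicitly.
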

\begin{proof}
For the conditional process the following balance equations can be obtained by dividing \eqref{eqn:balpi1} and \eqref{eqn:balpi2} by $\sigma_j$ and by noting that the $\sum_k \pi_{j,k} = \sigma_j$ (from Cor. \ref{cor:marg} the marginal distribution of the queue-length is the same as the marginal distribution of the speed):
\begin{align}
(\lambda + \nu)p_j(0) &= j\mu p_j(1), \nonumber \\
	(\lambda + \nu + j\mu) p_j(i) &= \lambda p_j(i-1) + j\mu p_j(i+1),  \quad i > 0, \quad  i\neq j; \nonumber \\
	(\lambda + \nu + j\mu) p_j(j) &= \lambda p_j(j-1)+ j\mu p_j(j+1) + \nu. \nonumber
\end{align}
Hence the generating function satisfies
\[
(\lambda + \nu + j \mu) f_j(z) - j \mu p_j(0) = \lambda z f_j(z) + \nu z^j + \tfrac{j\mu}{z}\left[f_j(z)-p_j(0)\right],
\]
which leads to the equation
\begin{equation}
\left[\lambda z^2 - (\lambda + \nu + j \mu) z + j\mu\right] f_j(z) = j\mu (1-z) p_j(0) - \nu z^{j+1}.
\label{eq:f_j(z)}
\end{equation}

For $j=0$, (\ref{eq:f_j(z)}) leads to
\[
f_0(z) = \frac{-\nu z}{\lambda z^2 - (\lambda + \nu) z} = \frac{\nu}{\lambda + \nu - \lambda z}  = \frac{\nu}{\nu+\lambda(1-z)}.
\]
For $j>0$, let $z_1 < 1$ and $z_2>1$ be the zeros of the polynomial in the lefthandside of \eqref{eq:f_j(z)}.
For $z=z_1$ also the righthandside of (\ref{eq:f_j(z)}) should be zero, hence
\[
p_j(0) = \frac{\nu z_1^{j+1}}{j \mu (1-z_1)}.
\]
So we conclude that
\begin{eqnarray*}
f_j(z) &=& \frac{\nu (1-z) z_1^{j+1} - \nu (1-z_1) z^{j+1}}{\lambda(z-z_1)(z-z_2)(1-z_1)} \\
&=& \frac{(\nu/z_2)}{\lambda(1-z/z_2)} \cdot \frac{(1-z_1) z^{j+1} - (1-z) z_1^{j+1}}{(z-z_1)(1-z_1)} \\
&=& \frac{(\nu/z_2)}{\lambda(1-z/z_2)} \cdot \left(\sum_{k=1}^j z_1^{j-k} z^k + \frac{z_1^j}{1-z_1}\right).
\end{eqnarray*}
Now, clearly, $z_1=\beta_j(\nu)$ and $\lambda z_1 z_2 = j \mu$. Hence, $z_2 = (j\mu)/(\lambda z_1) = 1 / \bt_j(\nu)$ and the result follows.
\end{proof}

\begin{remark}
The stationary queue-length process conditional on $S=j$ can readily be seen to have the stationary distribution of a process $Q_j(\cdot)$, obtained from the ordinary queue length process after eliminating time intervals for which the speed $\neq j$. Then, $Q_j(\cdot)$ behaves as the number of customers in an $M/M/1$ queue with arrival rate $\lambda$, service rate $j\mu$, and restarting in state $j$ after $\exp(\nu)$ distributed time periods.
It is a one-dimensional Markov chain with rate diagram shown in Fig.~\ref{fig:transition_diag_cond}.
 \begin{figure}[!htb]
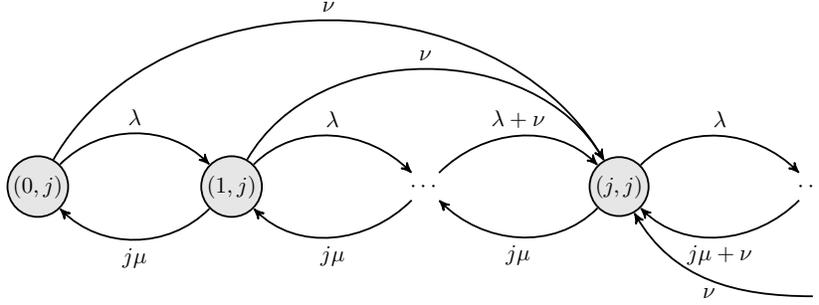

  \centering
  \includestandalone[height=0.35\textwidth]{trans_graph_cond}
    \caption{Rate diagram of the conditional process.}
    \label{fig:transition_diag_cond}
  \end{figure}

Further, the stationary distribution of $Q_j$ is the same as the transient distribution after an exponentially distributed time period with parameter $\nu$ in an $M/M/1$ queue with arrival rate $\lambda$, service rate $j \mu$ and starting at time $0$ in state $j$. Hence, equation (\ref{eqn:mar_q})
can be alternatively derived using results on the transient distribution in the $M/M/1$ queue (see Section I.4.4 and in particular formula (4.27) in \citet{cohen1982}).
\end{remark}
\begin{remark} 
From Proposition \ref{prop:cond_distr} it follows that, for $j>0$, we have that
\[
(Q \, | \, S=j) \overset{d}{=} \left[j-A_j\right]_+ + B_j,
\]
where $A_j$ and $B_j$ are independent random variables with $A_j$ geometrically distributed with parameter $\beta_j(\nu)$ and $B_j$ geometrically distributed with parameter $\tilde{\beta}_j(\nu)$.
\end{remark}

From the conditional generating function we can obtain the following expression for the stationary conditional distribution.
\begin{corollary}
For $j>0$,
\begin{equation}
	f_j(z) = \frac{\nu \bt_j(\nu)}{\lambda}\sum_{l\geq 0} \left(\sum_{k=0}^l c_{k,j}\bt_j(\nu)^{l-k}\right)z^l,
\label{eqn:mar_q_2}
\end{equation}
and hence
\begin{equation}
        p_j(l) = \frac{\nu}{\lambda} \sum_{k=0}^l c_{k,j}\bt_j(\nu)^{l-k+1}.
\label{eqn:mar_q_3}
\end{equation}
\end{corollary}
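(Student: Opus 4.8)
The plan is to obtain both formulas directly from the closed form for $f_j(z)$ given in Proposition~\ref{prop:cond_distr} by expanding the rational factor $1/(1-\bt_j(\nu)z)$ as a power series and then reading off the coefficient of $z^l$. Since $f_j$ is by construction the generating function of the conditional distribution $p_j(\cdot)$, the coefficient of $z^l$ in its power-series expansion \emph{is} $p_j(l)$, so once the first display is established the second follows by inspection.

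Concretely, I would proceed in three steps. First, I would justify the geometric expansion: in the proof of Proposition~\ref{prop:cond_distr} the larger root satisfies $z_2 = 1/\bt_j(\nu) > 1$, so $\bt_j(\nu) < 1$ and $\tfrac{1}{1-\bt_j(\nu)z} = \sum_{m\geq 0}\bt_j(\nu)^m z^m$ converges on a disk that strictly contains the closed unit disk, which is exactly the region where the generating function must be analytic. Second, I would form the Cauchy product of this series with $\sum_{k\geq 0} c_{k,j} z^k$; the coefficient of $z^l$ in the product is the convolution $\sum_{k=0}^{l} c_{k,j}\,\bt_j(\nu)^{l-k}$, and multiplying through by the prefactor $\nu\bt_j(\nu)/\lambda$ yields
\[
f_j(z) = \frac{\nu \bt_j(\nu)}{\lambda}\sum_{l\geq 0}\left(\sum_{k=0}^{l} c_{k,j}\,\bt_j(\nu)^{l-k}\right)z^l,
\]
which is the first claim. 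Third, extracting the coefficient of $z^l$ gives $p_j(l)=\tfrac{\nu\bt_j(\nu)}{\lambda}\sum_{k=0}^{l} c_{k,j}\bt_j(\nu)^{l-k}$, and absorbing the leading $\bt_j(\nu)$ into the exponent produces $p_j(l)=\tfrac{\nu}{\lambda}\sum_{k=0}^{l} c_{k,j}\bt_j(\nu)^{l-k+1}$, as stated.

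The argument is essentially a mechanical coefficient computation, so there is no deep obstacle; the only points requiring care are bookkeeping and convergence. On the bookkeeping side, one must keep the two summation indices of the Cauchy product straight and observe that, by \eqref{eqn:c_kj}, $c_{k,j}=0$ for $k>j$, so the inner convolution sum is in fact finite and effectively truncated at $\min(l,j)$; writing it up to $l$ is harmless but worth noting. On the analytic side, the interchange of summation implicit in forming the Cauchy product is legitimate precisely because $\bt_j(\nu)<1$ ensures absolute convergence on the relevant disk, so the formal manipulation is rigorous rather than merely symbolic.
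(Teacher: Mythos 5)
Your proof is correct and is exactly the argument the paper intends: the corollary is stated as an immediate consequence of Proposition~\ref{prop:cond_distr}, obtained by expanding $1/(1-\bt_j(\nu)z)$ as a geometric series (valid since $\bt_j(\nu)=1/z_2<1$), forming the Cauchy product with $\sum_k c_{k,j}z^k$, and reading off the coefficient of $z^l$. Your added remarks on convergence and on the truncation of the inner sum at $\min(l,j)$ are accurate and harmless refinements of the same computation.
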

The mean conditional queue-length can also be expressed explicitly.
\begin{corollary}
\begin{align}
	\E[Q_j] = f^\prime_j(1) &= \frac{\nu \bt_j(\nu)}{\lambda} \left(\frac{\bt_j(\nu)\sum_k c_{k,j}}{(1-\bt_j(\nu))^2} +
\frac{\sum_k kc_{k,j}}{1-\bt_j(\nu)} \right)
\label{eqn:eqj}
\end{align}	
with
\begin{equation}
		\sum_k c_{k,j} = (1-\bn{j})^{-1},
	\label{eqn:sckj}
\end{equation}
and 
\begin{equation}
\sum_k k c_{k,j} = \frac{j  - (j+1) \beta_j(\nu) +\beta_j(\nu)^{j+1}}{(1-\beta_j(\nu))^2}
\label{eqn:skckj}
\end{equation}
\end{corollary}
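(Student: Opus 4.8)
The plan is to split \eqref{eqn:mar_q} into its rational prefactor and its power series. Writing $C(z) := \sum_{k\ge 0} c_{k,j} z^k$, Proposition \ref{prop:cond_distr} reads $f_j(z) = \tfrac{\nu\btn{j}}{\lambda}\,C(z)\,(1-\btn{j}z)^{-1}$, so by the product rule
\begin{equation*}
f_j'(z) = \frac{\nu\btn{j}}{\lambda}\left[\frac{C'(z)}{1-\btn{j}z} + \frac{\btn{j}\,C(z)}{(1-\btn{j}z)^{2}}\right].
\end{equation*}
Evaluating at $z=1$ and noting that $C(1)=\sum_k c_{k,j}$ and $C'(1)=\sum_k k\,c_{k,j}$ reproduces \eqref{eqn:eqj} directly. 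Thus the whole corollary reduces to evaluating the two finite sums $\sum_k c_{k,j}$ and $\sum_k k\,c_{k,j}$, which are genuinely finite because $c_{k,j}=0$ for $k>j$ by \eqref{eqn:c_kj}.

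For $\sum_k c_{k,j}$ I would substitute the three cases of \eqref{eqn:c_kj}. The terms $k=1,\dots,j$ form a geometric series, $\sum_{k=1}^{j}\bn{j}^{\,j-k} = \sum_{m=0}^{j-1}\bn{j}^{\,m} = (1-\bn{j}^{\,j})/(1-\bn{j})$, and adding the $k=0$ term $c_{0,j}=\bn{j}^{\,j}/(1-\bn{j})$ cancels the $\bn{j}^{\,j}$ contributions, leaving $\sum_k c_{k,j}=(1-\bn{j})^{-1}$, which is \eqref{eqn:sckj}.

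For $\sum_k k\,c_{k,j}$ the $k=0$ term drops, so the sum equals $\sum_{k=1}^{j} k\,\bn{j}^{\,j-k}$. I would reindex by $m=j-k$ to rewrite it as $j\sum_{m=0}^{j-1}\bn{j}^{\,m} - \sum_{m=0}^{j-1} m\,\bn{j}^{\,m}$, then apply the geometric sum together with the standard arithmetic--geometric identity $\sum_{m=0}^{j-1} m\,x^m = \bigl(x - j x^j + (j-1)x^{j+1}\bigr)/(1-x)^{2}$. Placing both pieces over the common denominator $(1-\bn{j})^{2}$ and expanding yields numerator $j-(j+1)\bn{j}+\bn{j}^{\,j+1}$, i.e. \eqref{eqn:skckj}.

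The only delicate point is the bookkeeping in this last step: the arithmetic--geometric identity supplies several $\bn{j}^{\,j}$ and $\bn{j}^{\,j+1}$ terms, and one must check that they combine exactly with the $-j\,\bn{j}^{\,j}$ coming from $j\sum_{m=0}^{j-1}\bn{j}^{\,m}$, so that only a single $\bn{j}^{\,j+1}$ survives while all pure $\bn{j}^{\,j}$ terms cancel. A quick sanity check at $j=1$ (both sums equal $1$) and $j=2$ (where the numerator factors as $(1-\bn{2})^{2}(\bn{2}+2)$) confirms the target identity before carrying out the general expansion.
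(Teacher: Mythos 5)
Your proposal is correct and follows essentially the same route as the paper: differentiate the product form of $f_j(z)$ and evaluate at $z=1$ to get \eqref{eqn:eqj}, then compute the two finite sums directly from \eqref{eqn:c_kj} via geometric and arithmetic--geometric summation. The paper simply states these computations (adding only a secondary consistency check of \eqref{eqn:sckj} against the normalization $f_j(1)=1$ and the quadratic satisfied by $\bn{j}$), so your write-up just supplies the algebra it leaves implicit.
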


\begin{proof}
The expression in \eqref{eqn:eqj} follows directly. For \eqref{eqn:sckj}, from \eqref{eqn:c_kj} for $c_{k,j}$ we get
\begin{equation}
		\sum_k c_{k,j} = (1-\bn{j})^{-1}.
\end{equation}
Alternatively, since $f_j(1) = 1$, we have
\begin{equation}
	1 = \frac{\nu \bt_j(\nu)\sum_k c_{k,j}}{\lambda(1- \bt_j(\nu))}	
\end{equation}
and hence
\begin{equation}
\sum_k c_{k,j} = \frac{\lambda(1- \bt_j(\nu))}{\nu \bt_j(\nu)} = \frac{j\mu - \lambda \bn{j}}{\nu \bn{j}}.
\end{equation}
These two expressions are the same because $\lambda \bn{j}^2 - (\lambda + \nu + j \mu) \bn{j} + j\mu =0$.
For \eqref{eqn:skckj}, we have
\begin{equation}
\sum_k k c_{k,j} = \sum_{k=1}^{j} k \bn{j}^{j-k} = \frac{j  - (j+1) \beta_j(\nu) +\beta_j(\nu)^{j+1}}{(1-\beta_j(\nu))^2}
\end{equation}

\end{proof}

\begin{proposition}
\begin{equation}
	\sigma_l = \sum_j\sigma_j\frac{\nu \bt_j(\nu)}{\lambda}\left(\sum_{k=0}^l c_{k,j}\bt_j(\nu)^{l-k}\right), \, \forall l \geq 0.
\label{eqn:sigma}
\end{equation}
\label{prop:sigma}
\end{proposition}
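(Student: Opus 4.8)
The plan is to derive the identity directly from the law of total probability together with the marginal-equality Corollary~\ref{cor:marg}, so that essentially no new computation is needed. First I would write the marginal queue-length distribution as a mixture over the speed: by definition $\gamma_l = \sum_j \pi_{l,j}$ and $\pi_{l,j} = \sigma_j\, p_j(l)$, hence
\[
\gamma_l = \sum_j \sigma_j\, p_j(l).
\]
This expresses $\gamma_l$ as the $\sigma$-weighted average of the conditional probabilities whose generating functions $f_j$ were computed in Proposition~\ref{prop:cond_distr}.

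Next I would invoke Corollary~\ref{cor:marg}. The identity $P(1,y)=P(y,1)$ reads $\sum_j \sigma_j y^j = \sum_i \gamma_i y^i$ at the level of generating functions; comparing the coefficients of $y^l$ gives $\sigma_l = \gamma_l$ for every $l\geq 0$. Substituting this into the mixture representation replaces $\gamma_l$ on the left by $\sigma_l$, turning the relation into a consistency equation for the speed marginal alone:
\[
\sigma_l = \sum_j \sigma_j\, p_j(l).
\]

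Finally I would insert the explicit conditional probabilities. Reading off the coefficient of $z^l$ in \eqref{eqn:mar_q_2} (equivalently, formula~\eqref{eqn:mar_q_3}) gives
\[
p_j(l) = \frac{\nu\,\bt_j(\nu)}{\lambda}\sum_{k=0}^{l} c_{k,j}\,\bt_j(\nu)^{\,l-k},
\]
and substituting this into $\sigma_l = \sum_j \sigma_j\, p_j(l)$ produces precisely the claimed equation~\eqref{eqn:sigma}.

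I do not expect any genuine obstacle: the statement is an immediate bookkeeping consequence of the two preceding results. The only step requiring a moment of care is the passage from the generating-function identity in Corollary~\ref{cor:marg} to the coefficientwise equality $\sigma_l=\gamma_l$, but this is just uniqueness of power-series coefficients. One could note in passing that \eqref{eqn:sigma} is not a closed-form solution but a linear fixed-point system for $(\sigma_l)_{l\geq0}$, since the conditional laws $p_j(\cdot)$ themselves depend on $j$ through $\beta_j(\nu)$; this is exactly the numerically tractable system alluded to in the introduction.
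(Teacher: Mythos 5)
Your proof is correct and follows essentially the same route as the paper: the paper writes $P(x,y)=\sum_j \sigma_j f_j(x) y^j$, applies Corollary~\ref{cor:marg} to get $\sum_j \sigma_j y^j = \sum_j \sigma_j f_j(y)$, and extracts coefficients of $y^l$ after substituting \eqref{eqn:mar_q_2}, which is exactly your argument with the coefficient extraction performed at the end rather than at the start. Your reformulation via $\sigma_l=\gamma_l$ and the law of total probability $\gamma_l=\sum_j\sigma_j p_j(l)$ is just the coefficientwise reading of the paper's generating-function identities, so no substantive difference.
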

\begin{proof}
Observe that we can rewrite $P(x,y)$ in terms of $f_j(x)$ in the following way.
\begin{equation}
P(x,y) = \sum_j \sigma_j f_j(x) y^j.
\label{eqn:altP}
\end{equation}
Now use $P(1,y)=P(y,1)$ to obtain
\begin{equation}
		\sum_j \sigma_j f_j(1) y^j =  \sum_j \sigma_j f_j(y).
\label{eqn:sig_S}
\end{equation}
Substituting \eqref{eqn:mar_q_2}) in \eqref{eqn:sig_S} and comparing the coefficient of $y^l$ on either side, we get the desired system of linear equations for $\sigma_j$.
\end{proof}
\begin{remark}
Equation \eqref{eqn:sigma} can be interpreted as the balance equation of the embedded Markov chain of the server speed at control instants (see also \eqref{eqn:mar_q_3}).
\end{remark}

Since \eqref{eqn:sigma} is an infinite set of linear equations, it is not straightforward to solve them. Therefore, we look at the asymptotics of $P(x,y)$ when the rate of control $\nu$ goes to either $\infty$ or to $0$.

\subsection{Asymptotics for $\nu \to \infty$.}
When the rate of control $\nu \to \infty$, intuitively one expects the speed to be the same as the queue length since measurements are being made at a much faster rate compared to rates of variations in the queue-length. The two-dimensional process will live mainly on the diagonal states. The following result formalizes this intuition.

\begin{proposition}
If $\nu \to \infty$, then $P(x,y) \to e^{\rho(xy-1)}$.
\end{proposition}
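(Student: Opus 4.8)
The plan is to treat $\nu$ as a large parameter in the functional equation \eqref{eqn:gen_fun} and carry out a singular-perturbation argument, guided by the probabilistic picture that when control is infinitely frequent the speed is slaved to the queue length, $S(t)\approx Q(t)$, so that $Q$ evolves as an $M/M/\infty$ queue (birth rate $\lambda$, death rate $\mu Q$) whose stationary law is Poisson$(\rho)$; since $Q=S$ on the diagonal, tracking $x^Q y^S=(xy)^Q$ should give $P(x,y)\to e^{\rho(xy-1)}$.

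Concretely, first I would divide \eqref{eqn:gen_fun} by $\nu$ to obtain
\[
\Bigl(1+\tfrac{\lambda(1-x)}{\nu}\Bigr)P(x,y)+\tfrac{\mu y}{\nu}\Bigl(1-\tfrac1x\Bigr)\py\bigl[P(x,y)-P(0,y)\bigr]=P(xy,1).
\]
On the closed polydisk $|x|\le1,\ |y|\le1$ each $P=P_\nu$ is analytic and bounded by $1$, so by Montel's theorem any sequence $\nu_n\to\infty$ has a locally uniformly convergent subsequence with limit $P_\infty$. A Cauchy estimate bounds $\py P_\nu$ uniformly in $\nu$ on compact subsets of $|y|<1$, so the middle term above is $O(1/\nu)$ and drops out in the limit. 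Passing to the limit therefore yields $P_\infty(x,y)=P_\infty(xy,1)$, i.e.\ $P_\infty$ depends on $x,y$ only through the product $xy$; write $P_\infty(x,y)=g(xy)$ with $g(z):=P_\infty(z,1)$ and $g(1)=1$ (from $P_\nu(1,1)=1$).

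To pin down $g$ I would use the exact identity obtained by setting $y=1$ in \eqref{eqn:gen_fun}, namely \eqref{eq:subst_y=1},
\[
x\lambda P(x,1)=\mu\,\py\bigl[P(x,y)-P(0,y)\bigr]\big\rvert_{y=1},
\]
which holds for every $\nu$. Since $P_\infty(x,y)=g(xy)$ gives $\py[P_\infty(x,y)-P_\infty(0,y)]\rvert_{y=1}=x\,g'(x)$, passing to the limit turns this identity into $\lambda x\,g(x)=\mu x\,g'(x)$, i.e.\ $g'(x)=\rho\,g(x)$. With $g(1)=1$ this forces $g(x)=e^{\rho(x-1)}$, hence $P_\infty(x,y)=e^{\rho(xy-1)}$; as the limit is the same along every subsequence, the full limit exists and equals $e^{\rho(xy-1)}$.

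The hard part will be the analytic bookkeeping in the limit — specifically, justifying that the derivative term $\tfrac1\nu\py P_\nu$ really is negligible and that one may interchange the limit with $\py$ in the exact identity. I expect the uniform-boundedness/normal-families argument together with Cauchy estimates to handle this cleanly on $|y|<1$, with the boundary $|y|=1$ recovered by continuity; an alternative, should tighter control be needed, is to push the expansion one order further ($P_\nu=P_\infty+\nu^{-1}P_1+\cdots$) and read off the same ODE $\mu g'=\lambda g$ as the solvability condition of the $O(\nu^{-1})$ equation, using that the $P_1$ terms cancel at $y=1$.
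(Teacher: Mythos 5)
Your first half is sound, and it is a genuinely different route from the paper's: you work with the functional equation and complex analysis (uniform bound $|P_\nu|\le 1$ on the closed polydisk, Montel, Cauchy estimates on $|y|\le r<1$ to kill the $O(1/\nu)$ term, the $1/x$ factor being harmless since $P_\nu(x,y)-P_\nu(0,y)$ is divisible by $x$), concluding that every subsequential limit has the form $P_\infty(x,y)=g(xy)$ on the open polydisk. The paper instead works at the level of the coefficients $\pi_{i,j}$: it posits a formal expansion $\pi_{i,j}=\sum_m \pi^{(m)}_{i,j}\nu^{-m}$, extracts $\pi^{(0)}_{i,i}=(\lambda/i\mu)\pi^{(0)}_{i-1,i-1}$ from the balance equations at states $(i,i-1)$ and $(i-1,i)$ together with a cut between speed levels $i-1$ and $i$, and normalizes. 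Up to this point your argument is, if anything, more rigorous than the paper's corresponding step.

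The genuine gap is in the second half, precisely at the phrase ``the boundary $|y|=1$ recovered by continuity.'' Everything you need there lives on the boundary: $P_\nu(x,1)=\sum_i(\sum_j\pi_{i,j})x^i$, the identity \eqref{eq:subst_y=1} involves $\py P_\nu$ at $y=1$, i.e. the sums $\sum_j j\pi_{i,j}$, and the boundary condition $g(1)=1$ is exactly the statement that no probability mass escapes in the limit. Locally uniform convergence on the open polydisk gives convergence of each coefficient $\pi^{(\nu)}_{i,j}$, but the quantities above are infinite sums over $j$, so without a tightness/uniform-integrability bound that is uniform in $\nu$ you only get Fatou inequalities, not the equalities your limit passage requires. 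Continuity of each $P_\nu$ on the closed polydisk does not transfer to the limit: the p.g.f.'s $f_n(y)=y^n$ converge locally uniformly to $0$ on $|y|<1$ while $f_n(1)=1$ for all $n$. The gap is not cosmetic, because the defective limit $g\equiv 0$ (all mass escaping to infinity) satisfies both your limit equation $P_\infty(x,y)=g(xy)$ and your ODE $\mu g'=\lambda g$; only $g(1)=1$ rules it out, and indeed any $g(x)=Ce^{\rho x}$ with $0\le C\le e^{-\rho}$ is consistent with everything you have rigorously established. So tightness of $(Q,S)$ under the stationary laws, uniformly in large $\nu$ (e.g. via a uniform moment or Lyapunov bound), is the actual missing content and must be proved, not inferred from continuity. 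To be fair, the paper's own proof does not supply it either---assuming the $\nu^{-1}$-expansion with $\sum_i\pi^{(0)}_{i,i}=1$ begs the same question---and your fallback suggestion (push the expansion one order further and read the ODE off as a solvability condition) is essentially the paper's argument transplanted to generating functions, inheriting the same unproved assumption.
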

\begin{proof}
First of all, remark that, when $\nu \to \infty$, functional equation (\ref{prop:gen_fun}) reduces to $P(x,y)=P(xy,1)$ from which we conclude that $\pi_{i,j} \to 0$ for $i \neq j$.
Next, assume that $\pi_{i,j}$ has the following analytic expansion
\begin{equation}
	\pi_{i,j} = \sum_m \pi^{(m)}_{i,j}\nu^{-m}.
\end{equation}
Then we have that $\pi^{(0)}_{i,j}=0$ for $i \neq j$ and furthermore $\pi^{(1)}_{i,j}=0$ for $|i-j| \geq 2$.
From the balance equation for state $(i,i-1)$ we obtain $\pi^{(1)}(i,i-1)= \lambda \pi^{0)}(i-1,i-1)$. From the balance equation for state $(i-1,i)$ we obtain $\pi^{(1)}_{i-1,i}= i \mu \pi^{(0)}_{i,i}$. Furthermore, from a cut between speed level $i-1$ and speed level $i$ we obtain $\pi^{(1)}_{i-1,i} = \pi^{(1)}_{i,i-1}$. Combining these three equations leads to the relation $\pi^{(0)}_{i,i}= (\lambda/i \mu) \pi^{(0)}_{i-1,i-1}$ which together with the normalization equation $\sum_i \pi^{(0)}_{i,i} =1$ leads to the solution $\pi^{(0)}_{i,i} = (\rho^i/i!) e^{-\rho}$ and hence to the result that
$\lim_{\nu \to \infty} P(x,y) = e^{\rho(xy-1)}$.
\end{proof}

\subsection{Asymptotics for $\nu \to 0$}
\label{ssec:asymp}
On the other extreme, when the measurements are performed at a much slower rate, time-scale separation between the queue-length and the server-speed occurs. Since the server-speed does not change between two control instants, the queue-length evolves on a faster time-scale compared to the server-speed. In the spatial dimension, both the queue-length and the server-speed can evolve on two scales: fluid scale on which they are $O(\nu^{-1})$ and the normal scale on which they are $O(1)$. The trajectories on these processes rescaled by $\nu$ will be cyclic as shown in Fig. \ref{fig:fluidinfmax}.

\begin{figure}[ht]
    \centering
    \includegraphics[width=0.8\textwidth]{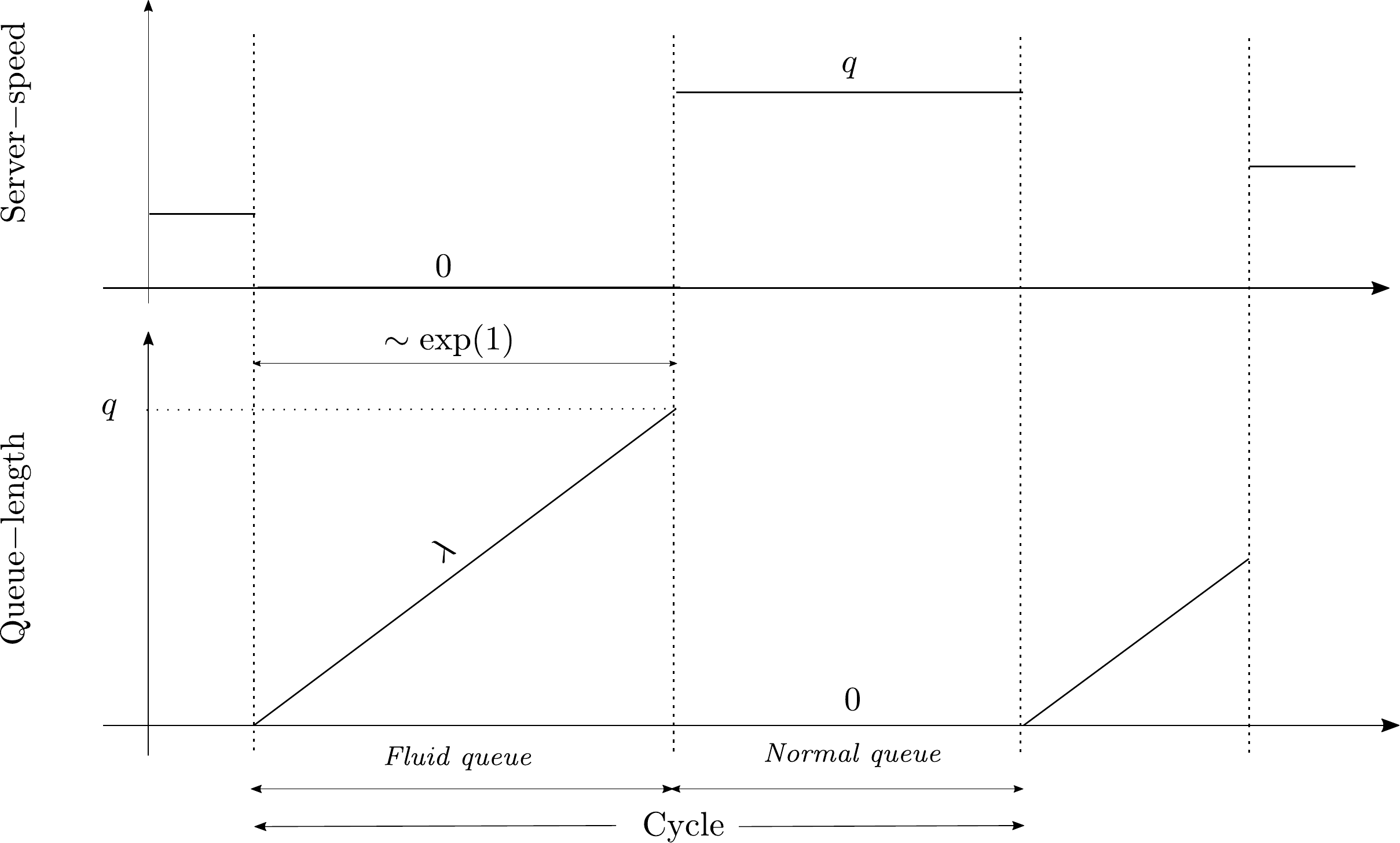}
    \caption{Trajectories of the rescaled queue-length and the rescaled server-speed in the limit $\nu\to 0$. The vertical dashed lines are control instants.}
    \label{fig:fluidinfmax}
\end{figure}

Assume time has been rescaled by a factor $\nu$ so that control instants form a Poisson process of rate $1$. A cycle begins when the measured queue-length is $0$. For this speed, the queue-length process being unstable, it grows linearly on the fluid scale at rate $\lambda$ until the next measurement instant. That is, during this period, the limiting process $\lim_{\nu\to 0} \nu Q(t)$ will grow linearly. The queue-length being on the fluid scale, the next measurement will set the speed to a value $O(\nu^{-1})$. This will bring the queue-length to $0$ instantaneously in time $O(\nu)$. The speed is now $O(\nu^{-1})$ whereas the arrival rate is $\lambda$. So, the queue-length will remain at $0$ (on the normal scale and not just on the fluid scale) until the next measurement instant at which point the speed will be set to $0$. At this point, a new cycle will begin. We were unable to formalize this intuition and leave it a conjecture.

Define $\hP(x,y) :=\lim_{\nu\to 0} P(x^\nu,y^\nu)$ to be the generating function of the scaled pair of random variables $(\nu Q, \nu S)$ in the limit $\nu\to 0$.
\begin{conjecture}
\begin{equation}
	\hP(x,y) = \frac{1}{2} \frac{1}{1 - \lambda \log(x)} + \frac{1}{2} \frac{1}{1 - \lambda \log(y)}.
\end{equation}
\label{prop:nu0}
\end{conjecture}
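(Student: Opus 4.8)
The plan is to convert the cyclic, two-scale heuristic in the text into a rigorous \emph{regenerative / stochastic-averaging} argument, and then to read off $\hP$ by renewal--reward. First I would pass to the slow time scale $\tilde t=\nu t$, under which the control process becomes a rate-$1$ Poisson process, and study the rescaled stationary process $(\bar Q_\nu,\bar S_\nu):=(\nu Q(\tilde t/\nu),\nu S(\tilde t/\nu))$. Since $P(x^\nu,y^\nu)=\E[x^{\bar Q_\nu}y^{\bar S_\nu}]$, it suffices to identify the limiting time-stationary law of $(\bar Q_\nu,\bar S_\nu)$ for $x,y\in(0,1)$. The process $(Q,S)$ regenerates at those control instants at which the observed queue is $0$ (the post-jump state is the fixed state $(0,0)$); by ergodicity these occur infinitely often and cut the trajectory into i.i.d.\ cycles, so the renewal--reward theorem gives, for bounded continuous $g$,
\[
\E\!\left[g(\bar Q_\nu,\bar S_\nu)\right] = \frac{\E\!\left[\int_0^{C_\nu} g\!\left(\bar Q_\nu(u),\bar S_\nu(u)\right)du\right]}{\E[C_\nu]},
\]
with $C_\nu$ the rescaled cycle length. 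The problem is thus reduced to the $\nu\to0$ limit of the cycle measure.

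Next I would show that a typical cycle degenerates, as $\nu\to0$, to exactly two inter-control intervals. In the first interval the speed is $0$, so $Q$ is a pure-birth process of rate $\lambda$; a functional law of large numbers gives $\bar Q_\nu(u)\to\lambda u$ while $\bar S_\nu\equiv0$. Writing $T_1\sim\mathrm{Exp}(1)$ for its length, the queue is observed at level $\approx\lambda T_1/\nu$, so the speed is reset to an $O(\nu^{-1})$ value and $\bar S_\nu\to\lambda T_1$. During the second interval this speed $s=O(\nu^{-1})$ greatly exceeds $\lambda$: the transient $M/M/1$ with arrival $\lambda$ and service $s\mu$ drains the $O(\nu^{-1})$ backlog in $O(1)$ real time, i.e.\ $O(\nu)$ rescaled time, and then sits in its near-empty quasi-stationary regime of mean $O(\nu)$. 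Hence $\bar Q_\nu\to0$ while $\bar S_\nu\to\lambda T_1$ stays frozen for a further $T_2\sim\mathrm{Exp}(1)$. At the closing control instant the observed queue is $0$ with probability $\to1$, the cycle regenerates, $C_\nu\Rightarrow T_1+T_2$ and $\E[C_\nu]\to2$. The estimates needed are (i) a functional LLN for the growth phase, (ii) a transient-$M/M/1$ hitting-time bound showing the drain time is $O(\nu)$ and the observed queue concentrates at $0$, and (iii) a bound showing the probability of an atypical cycle (extra intervals produced by an observed queue that is positive on the normal scale at a closing measurement) tends to $0$.

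Taking $g(q,s)=x^q y^s$ and passing to the limit, the numerator splits along the two phases. Since on the first interval $\bar S_\nu=0$, it contributes
\[
\E\!\left[\int_0^{T_1} x^{\lambda u}\,du\right] = \E\!\left[\frac{x^{\lambda T_1}-1}{\lambda\log x}\right] = \frac{1}{\lambda\log x}\!\left(\frac{1}{1-\lambda\log x}-1\right) = \frac{1}{1-\lambda\log x},
\]
using $\E[e^{\theta T_1}]=(1-\theta)^{-1}$ with $\theta=\lambda\log x<0$; since on the second interval $\bar Q_\nu=0$, it contributes $\E[T_2\,y^{\lambda T_1}]=\E[T_2]\,\E[y^{\lambda T_1}]=(1-\lambda\log y)^{-1}$ by independence of $T_1,T_2$. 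Dividing by $\E[C_\nu]\to2$ gives exactly $\hP(x,y)=\tfrac12(1-\lambda\log x)^{-1}+\tfrac12(1-\lambda\log y)^{-1}$. Equivalently, the limit law of $(\bar Q,\bar S)$ is supported on the two axes — with probability $1/2$ it equals $(E,0)$ and with probability $1/2$ it equals $(0,E')$, where $E,E'\sim\mathrm{Exp}(1/\lambda)$ — which is the transparent explanation of the additive form of the conjecture.

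The main obstacle is the rigorous justification of this degeneration and the attendant interchange of limits — precisely the step the authors could not complete. Three points require care. First, the functional LLN for the growth phase must hold uniformly over the random, $\nu$-dependent horizon $T_1/\nu$, and one must control the vanishing-but-nonzero probability that the observed queue at the end of an interval is $O(1)$ rather than $O(\nu^{-1})$, which would restart growth from a small speed rather than from $0$. Second, one needs uniform integrability of $C_\nu$ and of the rewards to upgrade distributional convergence of cycles to convergence of the ratio $\E[\,\cdot\,]/\E[C_\nu]$, showing that long cycles from rare excursions through intermediate speed levels carry no mass in the limit. Third, one must check that convergence in distribution of $(\bar Q_\nu,\bar S_\nu)$ transfers to convergence of the transforms $P(x^\nu,y^\nu)$ for $x,y\in(0,1)$, which follows from tightness and the explicit exponential tails. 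An alternative purely analytic route would insert $x\mapsto x^\nu$, $y\mapsto y^\nu$ into the functional equation \eqref{eqn:gen_fun} (or into \eqref{eqn:sigma}--\eqref{eqn:mar_q}, using that $\beta_j(\nu)\to1$ and $\tilde\beta_j(\nu)\to0$ on the fluid scale $j\sim a/\nu$); this is a genuinely singular perturbation, since the drift term there is $O(1)$ while the remaining terms are $O(\nu)$, and matching the normal- and fluid-scale parts of $\sigma_j$ is itself delicate — which is why I would favour the probabilistic argument above.
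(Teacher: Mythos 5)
First, a point of comparison: the statement you are proving is a \emph{conjecture} in the paper, not a proposition. The authors offer only the cyclic heuristic --- growth at rate $\lambda$ with speed $0$ during one rescaled $\mathrm{Exp}(1)$ inter-control interval, an instantaneous drain followed by a frozen $O(\nu^{-1})$ speed during the next, exponential laws on the two axes and a $1/2$--$1/2$ split of time between them --- and state explicitly that they were unable to formalize this intuition. Your proposal is that same heuristic, not a different route. What you add beyond the paper's prose is genuinely useful: the observation that control instants at which the observed queue is $0$ reset the chain to the fixed state $(0,0)$, hence are true regeneration epochs giving i.i.d.\ cycles and an exact renewal--reward identity valid for every $\nu>0$; and the limit computation itself, which is correct --- with $T_1,T_2\sim\mathrm{Exp}(1)$ independent, the growth phase contributes $\E[(x^{\lambda T_1}-1)/(\lambda\log x)]=(1-\lambda\log x)^{-1}$, the frozen phase contributes $\E[T_2]\,\E[y^{\lambda T_1}]=(1-\lambda\log y)^{-1}$, and division by $\E[C_\nu]\to 2$ yields exactly the conjectured $\hP(x,y)$, consistent with the paper's interpretation of the two axes carrying $\mathrm{Exp}(1/\lambda)$ laws with weight $1/2$ each.

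However, the proposal is a plan, not a proof, and the gap sits exactly where you (and the authors) locate it: your items (i)--(iii) --- the functional LLN uniformly over the random $\nu$-dependent horizon, the $O(\nu)$ drain-time and concentration-at-zero estimates for the closing observation, and the uniform integrability of $C_\nu$ together with control of atypical cycles --- are named but not proved, and they are precisely what separates the paper's Conjecture from a Proposition. One of them is also subtler than your sketch suggests: an atypical closing observation $k\ge 1$ splits into two cases. If $k\mu>\lambda$, the continuation stays on the normal scale and ends after a stochastically bounded number of further intervals; but if $k\mu\le\lambda$ (possible whenever $\lambda\ge\mu$), a fresh fluid excursion starts from the unstable speed $k$ with growth rate $\lambda-k\mu$, and the cycle acquires additional fluid phases, possibly nested through several such returns. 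Each continuation has probability $O(\nu)$, so one expects its contribution to both numerator and denominator of the renewal--reward ratio to vanish, but the uniform (in $\nu$) moment bound on $C_\nu$ over these nested continuations is the genuinely missing lemma; without it the interchange of $\nu\to 0$ with the cycle expectations is not justified. As it stands, your argument establishes what the paper's heuristic establishes --- the identification of the candidate limit --- and no more.
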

As $\nu\to 0$, the probability mass of the joint process concentrates around the two axes $Q=0$ and $S=0$. When $Q$ is on the fluid scale, $S=0$ while when $Q=0$, $S$ is on the fluid scale. 

On each of these two axes the scaled stationary process behaves like
an exponentially distributed random variable of rate $\lambda^{-1}$. The coefficient $1/2$ for each of the two terms in the above generating function corresponds to the proportion of time spent by the process on each of the two axes.

\section{Poisson controller with finite maximum speed}
\label{sec:finspeed}
In this section we look at the model with finite maximum speed $\sm$. 
At a control instant, the speed of the server is set equal to either the number of customers observed at that instant, or to $\sm$, whichever is smaller.  As before, between any two consecutive control instants, the speed of the server remains constant at the value chosen at the earlier control instant.

The transition rates of the Markovian process $(Q(t),S(t))_{t\geq 0}$ are now given by
\begin{equation}
(Q(t), S(t)) \to \left\{ \begin{array}{lcl}
			(Q(t) + 1, S(t)) & \mbox{with rate} & \lambda; \\
			(Q(t) - 1, S(t)) & \mbox{with rate} & \mu S(t); \\
			(Q(t), \mbox{min}(Q(t), \sm)) & \mbox{with rate} & \nu. \\
			\end{array}\right.
\end{equation}
which in matrix form can be written as
\begin{equation}
   G= \left(\begin{array}{ccccccc}
     A_{1,0} & A_0 &  &  & & \\
     A_2  & A_{1,1} & A_0 &  &  & \\
      & \ddots & \ddots & \ddots & & \\
      & & A_2 & A_{1,\ell}  & A_0 &   \\
     & & & \ddots & \ddots & \ddots  
     \end{array}
    \right).
\label{eqn:genfin}
\end{equation}
Here, the $(\sm+1) \times (\sm+1)$ matrices $A_0$, $A_{1,\ell}$ and $A_2$ are given by
\[
(A_0)_{i,j} = \left\{ \begin{array}{ll}
			0, & i \neq j; \\
			\lambda, & i = j; \\
			\end{array}\right.,  \quad 
(A_2)_{i,j} = \left\{ \begin{array}{ll}
			0, & i \neq j; \\
			j \mu, & i = j; \\
			\end{array}\right.,
\]
and, for $\ell \geq 0$,
\[
(A_{1,\ell})_{i,j} = \left\{ \begin{array}{ll}
            0, & i \neq j \mbox{ and } j \neq \min(\ell,\sm); \\
            \nu, & i \neq j \mbox{ and } j = \min(\ell,\sm); \\
            -(\lambda + j\mu + \nu), & i = j \mbox{ and } j \neq \min(\ell,\sm); \\
            -(\lambda + \min(\ell,\sm) \mu), & i = j \mbox{ and } j = \min(\ell,\sm); \\
            \end{array}\right.
\]

Figure \ref{fig:transition_diag_finite} shows the rate diagram of the model with $\sm=2$.
 \begin{figure}[!htb]
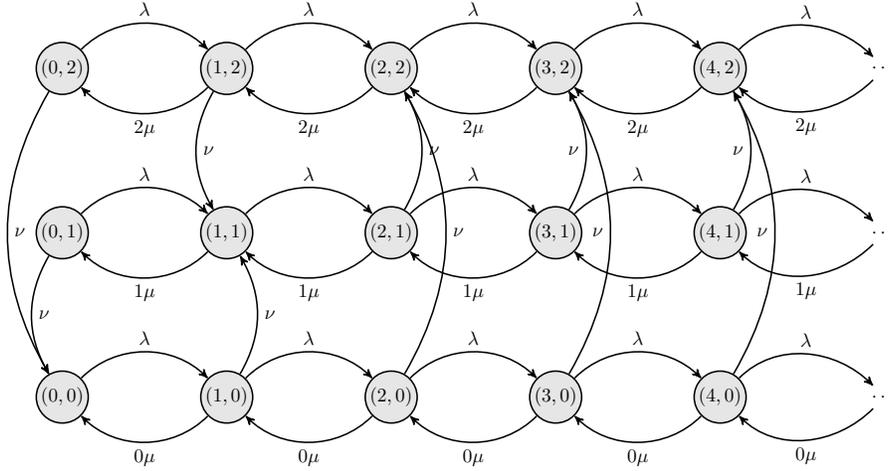

  \centering
 \includestandalone[height=0.5\textwidth]{trans_graph_sm2}
    \caption{Rate diagram of the model with $\sm=2$.}
    \label{fig:transition_diag_finite}
  \end{figure}
 
 Notice that this process $(Q(t),S(t))_{t\geq 0}$ is a Quasi-Birth-Death (QBD) process with level-dependent transition rates. 
 Furthermore notice that for $Q(t) \geq \sm$ the transition rates become level-independent. 
The ergodicity condition of the process $(Q(t),S(t))_{t\geq 0}$ is given by $\lambda<\bar{s}\mu$, for every $\nu>0$. This follows from the fact that the Markov process with rate matrix $A=A_0+A_{1,\bar{s}}+A_2$ is reducible with as only end class state $\bar{s}$. In this state the arrival rate is given by $\lambda$ and the service rate is given by $\bar{s}\mu$ and hence the ergodicity condition is given by the mean drift condition $\lambda<\bar{s}\mu$ 
(see Theorem 7.3.1 in \citet{latouche1999}).

\begin{remark}
    As mentioned in Sec. \ref{sec:infspeed}, in the case that $\bar{s} = \infty$ the system will be ergodic for any combination of $\lambda>0$, $\mu>0$ and $\nu>0$. This can be shown for example in the following way. Assume that $\lambda$ and $\mu$ are such that $(\tilde{s}-1)\mu \leq \lambda < \tilde{s} \mu$. Furthermore, take $U=\{(i,j) : 0 \leq i \leq \tilde{s}-1, 0 \leq j \leq \tilde{s}-1\}$, the set of states in which both the number of customers and the service speed are at most equal to $\tilde{s}-1$. Now the expected duration of an arbitrary excursion from the set $U$ in the model with maximal speed $\bar{s} = \infty$ will be smaller than the expected duration of the corresponding excursion in the model with maximal speed $\tilde{s} < \infty$ (essentially because during excursions outside the set $U$ the server in the model with infinite maximal speed $\bar{s} = \infty$ works always at least as fast as in the model with finite maximal speed $\tilde{s} < \infty$). As in the latter model this expected duration will be finite, because $\lambda < \tilde{s} \mu$, the expected duration will also be finite in the model with maximal speed $\bar{s} = \infty$ and hence the system will be ergodic.
    \label{rem:ergodic}
 \end{remark}

 The vector $\pi_n = [\pi_{n,0},\pi_{n,1}, \ldots, \pi_{n,\sm}]$ containing the stationary probabilities for the different states with $n$ customers in the system, are, for $n \geq \sm-1$, of the form
 \begin{equation}
 \pi_n = \pi_{\sm-1} R^{n-\sm+1},
 \label{eqn:qbd_gen}
 \end{equation}
where $R$ is the minimal non-negative solution of the matrix equation 
\[
A_0+RA_1+R^2 A_2=0,
\]
where $A_1 = A_{1,\sm}$.

The matrix $R$ can be explicitly calculated for this model. It is a diagonal matrix with additional non-zero values in the last column. 
\begin{proposition}
For the matrix $R$ we have that
\begin{equation}
    R = \left(\begin{array}{cccccc}
     \frac{\lambda}{\lambda+\nu} &  &  &  &  &\frac{\lambda}{\sm\mu} \\
       & \frac{\lambda \beta_1(\nu)}{\mu} &  & &  & \frac{\lambda(1-\beta_1(\nu))}{\sm\mu} \\
      & & \ddots &  &  & \vdots \\
       &  &  & \frac{\lambda \beta_{i}(\nu)}{i\mu} &  &\frac{\lambda(1-\beta_{i}(\nu))}{\sm\mu} \\
     &  &  & & \ddots & \vdots \\
      &  &  &  &  & \frac{\lambda}{\sm\mu}  
     \end{array}
    \right).
\end{equation}
\label{prop:Rmat}
\end{proposition}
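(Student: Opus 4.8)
The plan is to avoid solving the matrix equation $A_0 + RA_1 + R^2A_2 = 0$ directly and instead exploit the probabilistic meaning of the rate matrix. For a positive recurrent QBD (ergodicity holds here because $\lambda < \sm\mu$) one has $R = A_0(-A_1 - RA_2)^{-1} = A_0 N$, where $N_{ij}$ is the expected total time the process spends in state $(n+1,j)$ before its first return to level $n$, given that it starts in $(n+1,i)$, and $n \ge \sm$ is any level in the repeating part. Since $A_0 = \lambda I$ this reduces to $R_{ij} = \lambda N_{ij}$, so the whole problem becomes the computation of the taboo sojourn times $N_{ij}$ for the speed dynamics on levels $\ge n+1$. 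The structural fact that drives everything is that on these levels $Q > \sm$, so every control instant resets the speed to $\min(Q,\sm)=\sm$; hence starting from phase $i<\sm$ the phase can only jump (at rate $\nu$) to phase $\sm$, and $\sm$ is then frozen for the rest of the excursion. Consequently, starting in phase $i$, only the phases $i$ and $\sm$ are ever visited, which already explains why $R$ is diagonal with extra mass only in column $\sm$.

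For the diagonal entries I would compute $N_{ii}$ by a geometric count of visits to $(n+1,i)$. While the phase equals $i$, the level performs an $M/M/1$-type walk with up-rate $\lambda$ and down-rate $i\mu$, subject to an independent killing at rate $\nu$ (the control, which removes it from phase $i$). The central identity is that the probability of stepping down one level before this killing equals $\beta_i(\nu)$, the root in $(0,1)$ of $\lambda z^2 - (\lambda+i\mu+\nu)z + i\mu = 0$; indeed the one-level down-crossing probability $q$ satisfies exactly $\lambda q^2 - (\lambda+i\mu+\nu)q + i\mu = 0$, whose minimal nonnegative solution is $\beta_i(\nu)$. A visit to $(n+1,i)$ is repeated only if the walk first steps up and then crosses back down before being killed, i.e. with probability $\tfrac{\lambda}{\lambda+i\mu+\nu}\beta_i(\nu)$; summing the resulting geometric number of visits, each of mean length $(\lambda+i\mu+\nu)^{-1}$, and simplifying via the quadratic, gives $N_{ii} = \beta_i(\nu)/(i\mu)$ and hence $R_{ii} = \lambda\beta_i(\nu)/(i\mu)$. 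The boundary cases fall out of the same argument: for $i=0$ there is no service, $\beta_0(\nu)=0$, the state is visited once, and $R_{00}=\lambda/(\lambda+\nu)$; for $i=\sm$ there is no killing (control keeps the phase at $\sm$), the down-crossing probability is $1$ since $\lambda<\sm\mu$, and $R_{\sm\sm}=\lambda/(\sm\mu)$.

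For the last column I would decompose the excursion at the instant the phase switches to $\sm$. Starting from $(n+1,i)$ with $i<\sm$, the probability of returning to level $n$ while still in phase $i$ is precisely $\beta_i(\nu)$ (the event ``go down one level before control'', started at level $n+1$, is the event ``reach level $n$ before control''), so with probability $1-\beta_i(\nu)$ the control fires first and the process enters phase $\sm$ at some level $n+1+k$, $k \ge 0$. The key simplification is a skip-free argument: because the level moves by $\pm 1$, any path from a level $\ge n+2$ down to $n$ must pass through $n+1$, and by the strong Markov property the expected time spent at $(n+1,\sm)$ before absorption at level $n$ is the same, namely $N_{\sm\sm}=1/(\sm\mu)$, irrespective of the entry level $n+1+k$. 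Multiplying the entry probability by this common value yields $N_{i\sm}=(1-\beta_i(\nu))/(\sm\mu)$, i.e. $R_{i\sm}=\lambda(1-\beta_i(\nu))/(\sm\mu)$, which also reproduces $R_{0\sm}=\lambda/(\sm\mu)$ via $\beta_0(\nu)=0$.

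I expect the main obstacle to be the last column, specifically the two facts that make it collapse to a product: identifying $\beta_i(\nu)$ as the probability of returning to level $n$ before the speed is upgraded to $\sm$, and the skip-free observation that the level at which the upgrade occurs is irrelevant to the time subsequently accumulated at $(n+1,\sm)$. Once the entries are in hand, minimality and nonnegativity need no extra work, since $A_0N$ is by construction the minimal nonnegative solution of the matrix equation and therefore coincides with $R$. As an independent check one could instead substitute the displayed matrix into $A_0 + RA_1 + R^2A_2 = 0$ and verify it column by column using $\lambda\beta_i(\nu)^2 - (\lambda+i\mu+\nu)\beta_i(\nu)+i\mu=0$, together with the fact that the spectral radius of $R$ is less than one.
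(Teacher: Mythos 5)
Your proposal is correct and follows essentially the same route as the paper's proof: both rest on the identity $R = A_0 N = \lambda N$ (Theorem 6.4.1 in \citet{latouche1999}), deduce the sparsity pattern from the fact that controls at levels $\geq \sm$ can only send the phase to $\sm$, obtain the diagonal entries $N_{ii} = \beta_i(\nu)/(i\mu)$ from the killed $M/M/1$ busy-period structure (your geometric visit count is the unrolled form of the paper's recursion $N_{jj} = \tfrac{1}{\lambda+\nu+j\mu} + \tfrac{\lambda}{\lambda+\nu+j\mu}\beta_j(\nu)N_{jj}$), and get the last column as $N_{i\sm} = (1-\beta_i(\nu))N_{\sm\sm}$. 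Your explicit skip-free/strong-Markov justification for why the entry level into phase $\sm$ is irrelevant is a point the paper leaves implicit, but it is the same argument.
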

\begin{proof}
We use the fact that $R=A_0 N = \lambda I N = \lambda N$, where the matrix $N$ contains, for an arbitrary $n \geq \sm$, the expected sojourn times in the different states on level $n$, starting from the different states on level $n$, before the first visit to level $n-1$ (see Theorem 6.4.1 in \citet{latouche1999}). Now, clearly $N_{i,j}=0$ for all $i \neq j \mbox{ and } j \neq \sm$ because for these values of $i$ and $j$ you can only reach state $(n,j)$ starting from state $(n,i)$ via level $n-1$. Furthermore, $N_{0,0} = \tfrac{1}{\lambda+\nu}$, because if you start in state $(n,0)$ you stay there an exponentially distributed time with parameter $\lambda + \nu$ and will never return there before the first visit to level $n-1$. The value $N_{\sm,\sm}$ is equal to the expected time there is one customer in the system during a busy period of an $M/M/1$ queue with arrival rate $\lambda$ and service rate $\sm\mu$. Hence, $N_{\sm,\sm}= \frac{1}{\lambda + \sm\mu} + \frac{\lambda}{\lambda+\sm\mu} N_{\sm,\sm}$ and so $N_{\sm,\sm}=\frac{1}{\sm\mu}$. Furthermore, we have that $N_{0,\sm}=N_{\sm,\sm}$ because starting from state $(n,0)$ we certainly reach state $(n,\sm)$ before the first visit to level $n-1$. For $i \neq 0 \mbox{ and } i \neq \sm$ we have that $N_{i,\sm}=(1-\beta_i(\nu))N_{\sm,\sm}$, because $\beta_i(\nu)$ is the probability that the busy period in an $M/M/1$ queue with arrival rate $\lambda$ and service rate $i\mu$ is smaller than an independent exponential random variable with parameter $\nu$. Hence $1-\beta_i(\nu)$ is the probability that we reach state $(n,\sm)$ before the first visit to level $n-1$ when we start in a state $(n,i)$
with $i \neq 0 \mbox{ and } i \neq \sm$.

Finally, the quantity $N_{j,j}$ for $j \neq 0 \mbox{ and } j \neq \sm$ is equal to the expected time there is one customer in the system during a busy period of an $M/M/1$ queue with arrival rate $\lambda$ and service rate $j\mu$ and in which a disaster occurs removing all the customers in the system after an exponentially distributed time with parameter $\nu$. We have that $N_{j,j}= \frac{1}{\lambda + \nu + j\mu} + \frac{\lambda}{\lambda+\nu+j\mu} \beta_j(\nu) N_{j,j}$ and so $N_{j,j}=\frac{1}{\lambda(1-\beta_j(\nu))+\nu+j\mu} = \frac{\beta_j(\nu)}{j\mu}$. 
\end{proof}
Since $R$ is of the above form, $R^n$ has the following simple formula which can be used for computing $\pi_n$ in \eqref{eqn:qbd_gen}.
\begin{corollary}
\label{cor:rn}
Let $R_{i,j}$ be the element (i,j) of $R$. Then,
\begin{equation}
    R^n = \left(\begin{array}{ccccc}
    R_{0,0}^n &  &  &  & R_{0,\sm}\frac{\left(R_{\sm,\sm}^n - R_{0,0}^n\right)}{R_{\sm,\sm}-R_{0,0}} \\
      & \ddots &  &  & \vdots \\
       &  &  R_{i,i}^n &  & R_{i,\sm}\frac{\left(R_{\sm,\sm}^n - R_{i,i}^n\right)}{R_{\sm,\sm}-R_{i,i}}\\
      &  &  & \ddots &  \vdots \\
      &  &  &  & R_{\sm,\sm}^n
     \end{array}
    \right).
\end{equation}
\end{corollary}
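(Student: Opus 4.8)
The plan is to prove the formula by induction on $n$, exploiting the fact that $R$ has nonzero entries only on its diagonal and in its last column, a sparsity pattern that is preserved under multiplication by $R$. Introduce the shorthand $d_i := R_{i,i}$ for the diagonal entries and $c_i := R_{i,\sm}$ for the entries of the last column with $i<\sm$. I first note that $R$ is upper triangular, since every nonzero off-diagonal entry sits in column $\sm$ and hence above the diagonal; its eigenvalues are therefore exactly the diagonal entries $d_0,\ldots,d_{\sm}$, which already makes the claimed shape of $R^n$ plausible.

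For the base case $n=1$ the asserted formula returns $d_i$ on the diagonal and $c_i\,\frac{d_{\sm}-d_i}{d_{\sm}-d_i}=c_i$ in the last column, reproducing $R$ itself. For the inductive step I would assume the stated form for $R^n$ and compute $R^{n+1}=R\,R^n$ entry by entry. The key structural input is that both $R$ and $R^n$ vanish off the diagonal except in column $\sm$; hence in any product $\sum_k R_{i,k}(R^n)_{k,j}$ only the indices $k=i$ and $k=\sm$ can contribute. For an off-diagonal target $(i,j)$ with $j\neq\sm$, both surviving terms carry a factor $(R^n)_{i,j}$ or $(R^n)_{\sm,j}$ that vanishes by the induction hypothesis, so the entry is zero; for a diagonal target the only surviving contribution is $d_i\,d_i^{\,n}=d_i^{\,n+1}$.

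The one computation requiring care is the last-column entry for $i<\sm$, where two terms survive:
\[
R^{n+1}_{i,\sm} = d_i\,(R^n)_{i,\sm} + c_i\,(R^n)_{\sm,\sm}
= c_i\!\left( d_i\,\frac{d_{\sm}^{\,n}-d_i^{\,n}}{d_{\sm}-d_i} + d_{\sm}^{\,n}\right).
\]
Placing the bracket over the common denominator $d_{\sm}-d_i$, the cross terms $d_i d_{\sm}^{\,n}$ cancel and one is left with $c_i\,\frac{d_{\sm}^{\,n+1}-d_i^{\,n+1}}{d_{\sm}-d_i}$, exactly the claimed entry of $R^{n+1}$. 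This is nothing but the classical power formula for the $2\times2$ upper-triangular block $\left(\begin{smallmatrix} d_i & c_i \\ 0 & d_{\sm}\end{smallmatrix}\right)$ coupling state $i$ to state $\sm$, so an alternative and perhaps cleaner route is to observe directly that $R$ decomposes into such independent $2\times2$ blocks and to invoke that formula.

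I expect the only genuine subtlety to be the degenerate case $d_i=d_{\sm}$, in which the displayed quotient must be read as its limit $n\,d_i^{\,n-1}$. To handle this I would either verify that the diagonal entries $\lambda\beta_i(\nu)/(i\mu)$ are distinct from $\lambda/(\sm\mu)$ for every $i<\sm$ (so that the quotient form is always legitimate), or simply remark that the inductive identity above holds verbatim as $d_i\to d_{\sm}$, so that the formula extends by continuity.
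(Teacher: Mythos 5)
Your induction is correct, and in fact the paper offers no proof at all for this corollary: it is stated as an immediate consequence of the structure of $R$ established in Proposition \ref{prop:Rmat}, so your argument supplies exactly the verification the paper leaves implicit. Both the entrywise induction and your alternative observation --- that $R$ reduces to independent $2\times 2$ upper-triangular blocks $\left(\begin{smallmatrix} d_i & c_i \\ 0 & d_{\sm}\end{smallmatrix}\right)$ coupling each speed $i<\sm$ to speed $\sm$, whose powers are classical --- are sound, and the block viewpoint is arguably the cleaner way to present it.

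One substantive comment on your closing remark: the degenerate case $d_i=d_{\sm}$ is not vacuous, so of your two proposed resolutions only the continuity one survives. Writing $d_i=\lambda\beta_i(\nu)/(i\mu)$ and $d_{\sm}=\lambda/(\sm\mu)$, equality holds iff $\beta_i(\nu)=i/\sm$, and substituting $z=i/\sm$ into the quadratic $\lambda z^2-(\lambda+i\mu+\nu)z+i\mu=0$ shows this happens precisely when $\nu=(\sm-i)(\sm\mu-\lambda)/\sm$, which is a legitimate positive value of $\nu$ under the ergodicity condition $\lambda<\sm\mu$ (similarly $d_0=d_{\sm}$ when $\nu=\sm\mu-\lambda$). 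So for these exceptional parameter choices the displayed quotient is $0/0$ and must be read as its limit $n\,d_{\sm}^{\,n-1}$; your inductive identity $R^{n+1}_{i,\sm}=d_i(R^n)_{i,\sm}+c_i d_{\sm}^{\,n}$ indeed holds verbatim in that case and propagates the limit form, which is the right way to close the gap --- and is, strictly speaking, a caveat the paper's own statement of the corollary also needs.
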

From the above corollary, it can inferred that the joint probability vector of level $n \geq \sm$ and speeds other than $\sm$ is a geometric term with rate $R_{i,i}$. For level $n\geq\sm$ and speed $\sm$, the probability is a mixture of geometric terms:  terms with rate $R_{i,i}$, $i < \sm$, and one with rate $R_{\sm,\sm}$.

The probability vectors $\pi_{n}$, $n\leq\sm-1$, can now be computed using a system of linear equations and the normalization equation (see, e.g., \citet{latouche1999}). It then follows that the joint generating function can be expressed as
\begin{proposition}
\label{prop:gf_fin}
\begin{equation}
    \label{eqn:gf_fin}
    P(x,y) = \sum_{n < \sm-1, j} \pi_{n,j}x^n y^j + \pi_{\sm-1}(I-Rx)^{-1}x^{\sm-1}\cdot\mathbf{y}^T,
\end{equation}
with $\mathbf{y} = [1, y, y^2, \hdots, y^{\sm}]$
\end{proposition}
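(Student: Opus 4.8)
The plan is to split the double sum defining $P(x,y)$ according to whether the level index $n$ lies below or at/above the threshold $\sm-1$, and to evaluate the tail part using the matrix-geometric representation \eqref{eqn:qbd_gen}. First I would write
\[
P(x,y) = \sum_{n \geq 0} x^n \sum_{j=0}^{\sm} \pi_{n,j}\, y^j
\]
and observe that $\sum_{j=0}^{\sm} \pi_{n,j} y^j = \pi_n \mathbf{y}^T$, where $\pi_n = [\pi_{n,0},\ldots,\pi_{n,\sm}]$ is the row vector of level-$n$ stationary probabilities and $\mathbf{y} = [1,y,\ldots,y^{\sm}]$. This rewrites the inner sum over speeds as a single vector-vector product, letting me treat all speeds $j$ simultaneously.

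Next I would split $\sum_{n\geq 0} = \sum_{n < \sm-1} + \sum_{n \geq \sm-1}$. The first block contributes the finite boundary sum $\sum_{n < \sm-1,\, j} \pi_{n,j} x^n y^j$ verbatim. For the tail block I substitute $\pi_n = \pi_{\sm-1} R^{n-\sm+1}$, which holds for $n \geq \sm-1$ by \eqref{eqn:qbd_gen}, and reindex with $m = n-(\sm-1)$. Using that the scalar $x$ commutes with $R$, so that $x^m R^m = (Rx)^m$, this gives
\[
\sum_{n \geq \sm-1} x^n \pi_n \mathbf{y}^T
= x^{\sm-1}\, \pi_{\sm-1} \Bigl( \sum_{m \geq 0} (Rx)^m \Bigr) \mathbf{y}^T .
\]

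Finally I would sum the matrix geometric series as $\sum_{m\geq 0}(Rx)^m = (I-Rx)^{-1}$ and reassemble the two blocks; the scalar $x^{\sm-1}$ may then be placed anywhere in the product, recovering the stated form. The only step requiring genuine justification is the convergence and invertibility underlying $\sum_{m\geq 0}(Rx)^m = (I-Rx)^{-1}$, which holds provided the spectral radius of $Rx$ is strictly below $1$. Here I would invoke ergodicity: under the drift condition $\lambda < \sm\mu$ the matrix $R$ is the minimal non-negative solution with $\mathrm{sp}(R) < 1$ (consistent with the sub-stochastic interpretation $R = \lambda N$ from Prop.~\ref{prop:Rmat}), so that for $|x| \leq 1$ one has $\mathrm{sp}(Rx) = |x|\,\mathrm{sp}(R) < 1$ and the Neumann series converges on the closed unit disk, making the formal manipulation rigorous. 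Everything else is bookkeeping, so I expect this spectral-radius argument to be the main, and essentially the only, point of substance.
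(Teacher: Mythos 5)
Your proof is correct and takes essentially the same route as the paper, which states Proposition \ref{prop:gf_fin} as an immediate consequence of the matrix-geometric form \eqref{eqn:qbd_gen}: splitting $P(x,y)=\sum_n x^n \pi_n \mathbf{y}^T$ at level $\sm-1$ and summing the resulting Neumann series is exactly the bookkeeping the paper leaves implicit. Your spectral-radius justification ($\mathrm{sp}(Rx)\leq|x|\,\mathrm{sp}(R)<1$ for $|x|\leq 1$ under the ergodicity condition $\lambda<\sm\mu$, visible also from the explicit diagonal entries of $R$ in Proposition \ref{prop:Rmat}) is the appropriate way to make the inversion $(I-Rx)^{-1}$ rigorous.
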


Observe that from the particular form of $R$ in Prop. \ref{prop:Rmat}, $(I-Rx)^{-1}$ can be explicitly computed:
\begin{equation}
    (I-Rx)^{-1} = \left(\begin{array}{ccccc}
    \frac{1}{1-R_{0,0}x} &  &  &  &\frac{R_{0,\sm}x}{(1-R_{\sm,\sm}x)(1-R_{0,0}x)} \\
      & \ddots &  &  & \vdots \\
       &  &  \frac{1}{1-R_{i,i}x} &  & \frac{R_{i,\sm}x}{(1-R_{\sm,\sm}x)(1-R_{i,i}x)}\\
      &  &  & \ddots & \vdots \\
      &  &  &  & \frac{1}{1-R_{\sm,\sm}x}
     \end{array}
    \right).
\end{equation}

As an illustration of this analysis, we obtain the steady-state probabilities for $\sm=1$. In this case, our model is a special case of the model with two different service speeds discussed in \citet{B2008}. 

For $\sm=1$, \eqref{eqn:qbd_gen} and Prop. \ref{prop:Rmat} reduce to
\begin{equation}
[\pi_{n,0}, \pi_{n,1}] = [\pi_{0,0}, \pi_{0,1}] R^n,
\end{equation}
with
\begin{equation}
R = \left(\begin{array}{cc}
     \frac{\lambda}{\lambda+\nu} & \frac{\lambda}{\mu} \\
     0                           & \frac{\lambda}{\mu}
     \end{array}
    \right).
\end{equation}
Furthermore,
\begin{equation}
\pi_{0,1} = \tfrac{\lambda}{\nu} \pi_{0,0}
\end{equation}
and
\begin{equation}
R^n = \left(\begin{array}{cc}
     \left(\frac{\lambda}{\lambda+\nu}\right)^n & \frac{\lambda+\nu}{\lambda+\nu-\mu}\left[\left(\frac{\lambda}{\mu}\right)^n - \left(\frac{\lambda}{\lambda+\nu}\right)^n\right] \\
     0                           & \left(\frac{\lambda}{\mu}\right)^n
     \end{array}
    \right).
\end{equation}
Hence, we obtain
\begin{eqnarray}
\pi_{n,0} &=& \left(\tfrac{\lambda}{\lambda+\nu}\right)^n \pi_{0,0}, \\
\pi_{n,1} &=& \left[ \left(\tfrac{\lambda}{\nu}+  \tfrac{\lambda+\nu}{\lambda+\nu-\mu}\right) \left(\tfrac{\lambda}{\mu}\right)^n -  \tfrac{\lambda+\nu}{\lambda+\nu-\mu} \left(\tfrac{\lambda}{\lambda+\nu}\right)^n \right] \pi_{0,0}.
\end{eqnarray}
The constant $\pi_{0,0}$ follows of course from the normalization equation, and takes the value
\begin{equation}
\pi_{0,0} = \tfrac{\nu (\mu - \lambda)}{\mu (2\lambda + \nu)}.
\end{equation}
Remark that
\begin{equation}
P(S=0) = \tfrac{(\mu-\lambda)(\lambda+\nu)}{\mu(2\lambda+\nu)}, \quad
P(S=1) = \tfrac{\lambda(\lambda+\mu+\nu)}{\mu(2\lambda+\nu)}
\end{equation}
and that
\begin{equation}
P(Q=0,S=1) = \pi_{0,1} = \tfrac{\lambda (\mu - \lambda)}{\mu (2\lambda + \nu)}
\end{equation}
so that
\begin{equation}
P(Q>0,S=1) = \tfrac{\lambda(\lambda+\mu+\nu)- \lambda (\mu - \lambda)}{\mu(2\lambda+\nu)}
= \tfrac{\lambda(2\lambda+\nu)}{\mu(2\lambda+\nu)}= \tfrac{\lambda}{\mu}
\end{equation}
as it should be. 

For the joint probability generating function we have
\begin{eqnarray}
P(x,y) &=& \sum_{n=0}^{\infty} \pi_{n,0} x^n + \sum_{n=0}^{\infty} \pi_{n,1} x^n y \\
&=& \pi_{0,0} \left[ \tfrac{\lambda+\nu}{\nu+\lambda(1-x)} - \tfrac{\lambda+\nu}{\lambda+\nu-\mu} \cdot \tfrac{\lambda+\nu}{\nu+\lambda(1-x)} \cdot y \right. \nonumber\\
    && \quad  \left. +\left(\tfrac{\lambda}{\nu}+  \tfrac{\lambda+\nu}{\lambda+\nu-\mu}\right) \cdot \tfrac{\mu}{\mu-\lambda x} \cdot y\right]  \label{eqn:gf_s1}
\end{eqnarray}

\subsection{Asymptotics for $\nu\to \infty$}
\label{ssec:fin_inf}
In this asymptotic regime, intuitively, as in Sec.~\ref{sec:infspeed}, the joint probability will be non-zero only on states of the type $(q,q)$ for $q<\sm-1$ and $(q,\sm)$ for $q\geq\sm$. The stationary probability of seeing queue-length $q$ will be that of an $M/M/\sm$ queue. Formally, 
\begin{proposition}
\label{prop:fininf}
\begin{equation}
    \lim_{\nu\to \infty} \pi_{q,j} = \begin{cases}
                                \pi_{0,0}\frac{\rho^q}{j! \sm^{(q-j)}}, & j=\min(q,\sm);\\
                                0, & \mbox{otherwise}.
                                \end{cases}
\end{equation}
\end{proposition}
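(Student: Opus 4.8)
The plan is to mirror the proof of the $\nu\to\infty$ limit in the infinite-speed case, replacing the diagonal $\{(i,i)\}$ by the curve $\{(i,\min(i,\sm))\}$ on which the limiting process concentrates, and to determine the surviving diagonal probabilities through the level-crossing identity rather than through the full set of balance equations.

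First I would show that all off-diagonal mass vanishes. Fix a state $(i,j)$ with $j\neq\min(i,\sm)$. A control transition out of such a state strictly changes the speed (rate $\nu$) while no control transition can enter it, so its balance equation reads $(\lambda+j\mu+\nu)\pi_{i,j}=\lambda\pi_{i-1,j}+j\mu\pi_{i+1,j}$ (with $\pi_{-1,j}:=0$). Since all stationary probabilities lie in $[0,1]$, this yields the uniform bound $\pi_{i,j}\le(\lambda+j\mu)/(\lambda+j\mu+\nu)\to0$ as $\nu\to\infty$. Writing $\pi_{i,j}=\sum_m\pi^{(m)}_{i,j}\nu^{-m}$ as in the infinite-speed argument, this shows $\pi^{(0)}_{i,j}=0$ whenever $j\neq\min(i,\sm)$, so the limiting mass sits entirely on the states $(i,\min(i,\sm))$.

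Next I would pin down the surviving diagonal probabilities $d_q:=\pi^{(0)}_{q,\min(q,\sm)}$. The finite-speed model satisfies the analogue of the up-/down-crossing identity of Corollary~\ref{cor:avgrate}, namely $\lambda\gamma_i=\sum_{j}j\mu\,\pi_{i+1,j}$ with $\gamma_i=\sum_j\pi_{i,j}$, because control transitions preserve the level and therefore cross no cut between levels $i$ and $i+1$. Letting $\nu\to\infty$ and keeping only zeroth-order terms, the only surviving contribution on each side is the diagonal one, so $\gamma_i\to d_i$ and $\sum_j j\mu\,\pi_{i+1,j}\to\min(i+1,\sm)\mu\,d_{i+1}$. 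This gives the clean recursion $\lambda d_i=\min(i+1,\sm)\mu\,d_{i+1}$, i.e. $d_{i+1}=\tfrac{\rho}{\min(i+1,\sm)}\,d_i$. Iterating from $d_0$ gives $d_q=\frac{\rho^q}{\prod_{i=1}^{q}\min(i,\sm)}\,d_0$, and since $\prod_{i=1}^q\min(i,\sm)=\min(q,\sm)!\,\sm^{\,q-\min(q,\sm)}$, this is exactly the claimed $M/M/\sm$ profile with $d_0=\lim_{\nu\to\infty}\pi_{0,0}$; the normalization $\sum_q d_q=1$ (convergent since $\lambda<\sm\mu$) fixes $\pi_{0,0}$.

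Because the speed ranges over the finite set $\{0,\dots,\sm\}$, the crossing identity at each fixed level is a \emph{finite} sum, so once the per-state limits $\lim_{\nu\to\infty}\pi_{i,j}$ exist they satisfy the recursion exactly and no term-by-term interchange is needed there. I expect the only genuinely delicate point to be the existence of these limits together with the passage to the limit in the normalization $\sum_q\gamma_q=1$. Both are secured by a tightness argument: by Proposition~\ref{prop:Rmat} the spectral radius of $R$ equals $\max_i R_{i,i}$, which for large $\nu$ equals $R_{\sm,\sm}=\rho/\sm<1$, giving a geometric tail uniform in $\nu$. Hence $\{\pi^{(\nu)}\}$ is tight, every subsequential limit is a probability distribution solving the recursion above, and uniqueness of that solution forces convergence of the whole family to the stated $M/M/\sm$ law.
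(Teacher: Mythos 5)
Your proof is correct, but it follows a genuinely different route from the paper's. The paper's own proof is a sketch based on singular perturbation theory: it rescales time so the control rate is $1$, writes the generator as $G_0+\epsilon G_1$ with $\epsilon=\nu^{-1}$, observes that each level-block of $G_0$ has stationary distribution concentrated on speed $\min(q,\sm)$, identifies the aggregated chain as an $M/M/\sm$ queue, and then appeals to the general aggregation results of \citet{AAN04}. You instead argue directly from the stationary equations: (i) states off the curve $j=\min(i,\sm)$ receive no control inflow, so their balance equations give a uniform $O(1/\nu)$ bound; (ii) the exact level-crossing identity (the finite-speed analogue of Corollary~\ref{cor:avgrate}), being a finite sum per level, passes to the limit and yields the $M/M/\sm$ birth-death recursion $\lambda d_i=\min(i+1,\sm)\mu\,d_{i+1}$; (iii) tightness, via the explicit $R$-matrix of Proposition~\ref{prop:Rmat}, justifies subsequential limits, the normalization, and hence full convergence by uniqueness of the solution to the recursion. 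What your approach buys is self-containedness and rigor where the paper defers to an external theorem: you handle existence of the limits and conservation of mass explicitly, which is exactly what the aggregation framework is invoked for. What the paper's approach buys is brevity and a conceptual identification of the limit as the stationary law of the aggregated (fast-observation) chain, a viewpoint that extends immediately to the general speed profiles $s_i$ mentioned in the Discussion.

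Two minor points you should tighten, neither of which is a genuine gap. First, your generic balance equation is slightly off at $i=0$: the outflow rate there is $\lambda+\nu$ (no service term), so the bound becomes $\pi_{0,j}\le j\mu/(\lambda+\nu)$, which still vanishes. Second, "spectral radius $<1$ gives a geometric tail uniform in $\nu$" needs the constants to be uniform; this does hold, but the clean way to see it is via the explicit form of $R^n$ in Corollary~\ref{cor:rn}, whose entries are bounded by a constant times $(\rho/\sm)^n$ once $\nu$ is large enough that $R_{i,i}\le\rho/(2\sm)$ for $i<\sm$.
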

The proof is based on arguments from singular perturbation theory \citep{AAN04}. Here, we give a sketch of the proof. Rescale time by $\nu$ so that the control rate is $1$, the arrival rate is  $\lambda\nu^{-1}$ and the service requirement has rate $\mu\nu^{-1}$. With this transformation, the generator in \eqref{eqn:genfin} can be written as
\begin{equation}
    G = G_0 + \epsilon G_1
\end{equation}
with $\epsilon=\nu^{-1}$. Note that $G_0$ is a block diagonal matrix. That is, upon setting $\epsilon = 0$, each block becomes a separate ergodic class. Thus, this chain is singularly perturbed for $\epsilon > 0$. The limiting distribution for $\epsilon\to 0$ can be computed from the stationary distribution of each block of $G_0$ as well as that of the aggregated chain (see Sec. 4 in \citet{AAN04}). It can be easily checked that the stationary distribution of block $q$ of $G_0$ is $\mathbf{e}_{\min(q,\sm)}$, where $\mathbf{e}_i$ is the unit vector with $1$ in the $i$th column. This is true since any observation in block $q$ sets the speed to $\min(q,\sm)$. Moreover, the aggregated chain has the dynamics of an $M/M/\sm$ queue with arrival rate $\lambda$ and service requirement $\mu^{-1}$. Combining these two observations, the proof follows.

\subsection{Asymptotics for $\nu\to 0$}
We shall use similar arguments as in Sec.~\ref{ssec:asymp} to obtain the generating function for the finite maximum speed case as $\nu\to 0$. In the spatial dimension, the queue-length process can be either on the fluid scale or on the normal scale depending on whether the speed of the server is smaller or larger than $\lambda$. Unlike in the infinite maximum speed case, here the speed always remains on the normal scale.

In the following, we shall assume that time has been rescaled by a factor $\nu$ so that the control instants happen according to a Poisson process of rate $1$. In order to derive various quantities related to the queue-length and the speed, we shall use an intuitive argument based on time-scale separation to first compute the marginal distribution of the server-speed. From the marginal distribution, the generating function of the joint process will be obtained using \eqref{eqn:altP}.

Define 
\begin{equation}
	\S^+ = \{ j : j > \rho\}, \quad
	\S^- = \{ j : j \leq \rho\}.
\end{equation}
The set $\S^+$ contains the speeds for which the queue is positive recurrent while $\S^-$ is its complement set. 

The marginal distribution of the speed will be computed from the stationary distribution of the Markov chain embedded at the control instants (and one more point, which will be explained later) and the renewal reward theorem. Let a cycle denote the time between two consecutive observations in $\S^{-}$. An illustration of trajectories of the {\it queue-length on the fluid scale} and the speed is shown in Fig. \ref{fig:fluidmax}. Until the first control instant after the start of a cycle, the queue-length grows linearly on the fluid scale with rate $\lambda -j\mu$ where $j$ is the queue-length sampled at the start of the cycle. Hence, at the first control instant the speed will necessarily be set to $\sm$, which is in $\S^+$. Since the queue-length process is now stable, it will decrease with rate $\sm\mu - \lambda$ for one or more control instants until the fluid hits $0$. (Notice that control instants do not affect the server working at constant speed $\sm$ while the queue is at the fluid level, ensuring that level 0 will be reached.) The queue-length process will now evolve on the normal scale for one or more control instants until a speed from $\S^-$ is sampled. At this point, a new cycle will begin. A cycle can thus be decomposed into three phases: the \fu\ phase during which the fluid grows; the \fs\ phase during which the fluid drains; and the \no\ phase during which the queue length lives on the normal scale.

\begin{figure}
    \centering
    \includegraphics[width=0.8\textwidth]{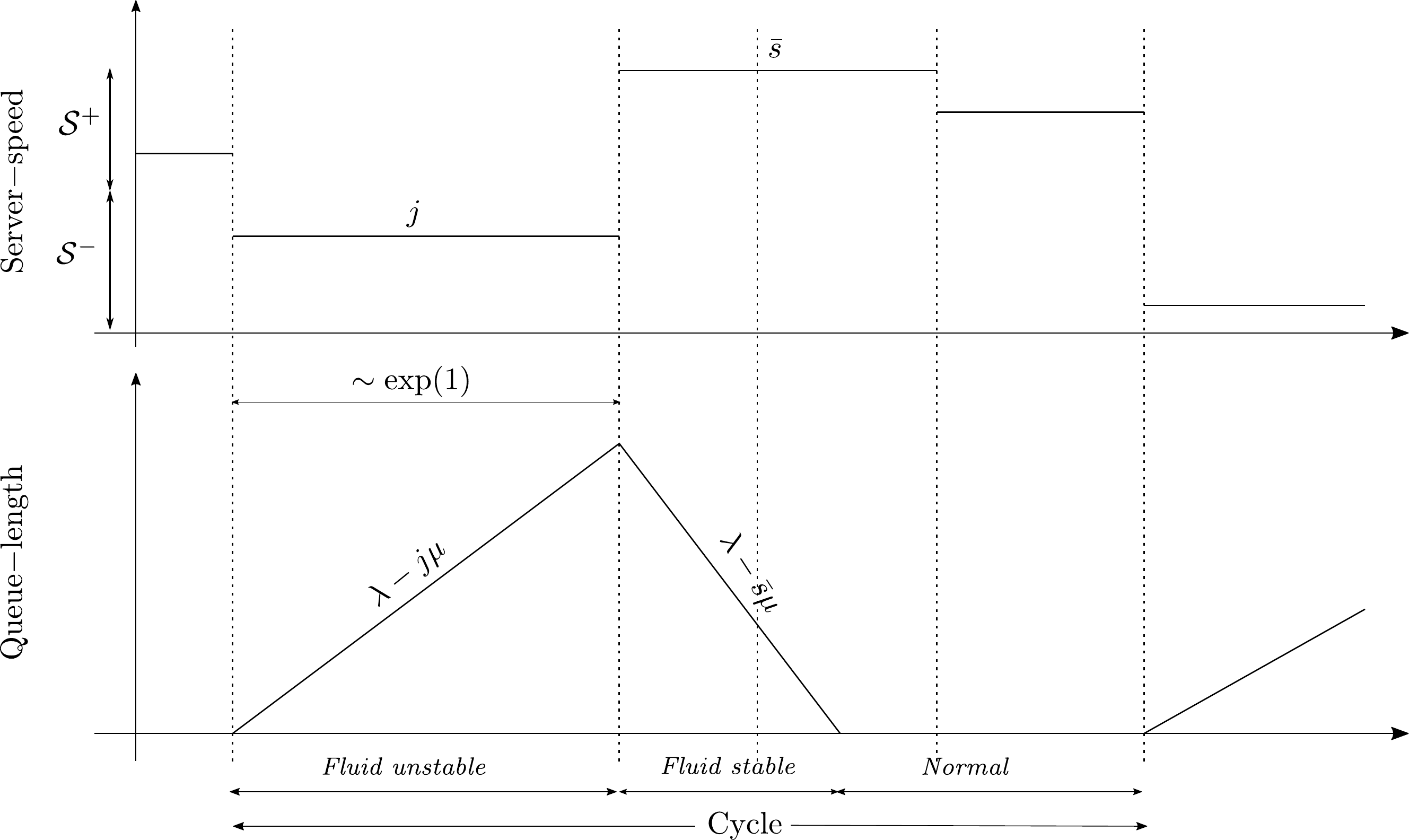}
    \caption{Trajectories of the queue-length on the fluid scale and the server speed in the limit $\nu\to 0$. The vertical dashed lines are control instants. Trajectories that occur on the normal scale all collapse to 0.}
    \label{fig:fluidmax}
\end{figure}

Let $\hat{S}_n$ be the server-speed process embedded just after control instants except in the \textit{fluid stable} state when the process is embedded at instants when the fluid hits $0$. Observe that $\sm$ is the only state in which the queue-length process can be in both the fluid scale as well as the normal scale. For the computation of the generating function, it will be convenient to compute the marginal probability of the server-speed being $\sm$ separately for each of the two scales of the queue-length process. For this,  define the server-speed $\sm_f$ which indicates the maximum speed when the queue-length is on the fluid scale. We shall use $\sm$ as before for the maximum speed when the queue-length is on normal scale. $\hat{S}_n$ is a discrete-time Markov chain taking values in $\S\cup\{\sm_f\}$, with transition probability matrix
\begin{align}
M =
 \begin{blockarray}{cccc}
 & \S^- & \S^{+} & \sm_f\\
 \begin{block}{c[ccc]}
 \S^- & \mathbf{0} & \mathbf{0} & \mathbf{1}\\
 \S^+ & V_0 & V & \mathbf{0}\\
 \sm_f & \mathbf{0} & \mathbf{b} & 0\\
 \end{block}
 \end{blockarray}.	
\end{align}
The elements $(i,j)$ of matrix $M$ for $i\in\S^+$ and $j\in\S^-\cup\S^+$ (these are all entries of the matrices $V_0$ and $V$) can be deduced directly: $M_{i,j}$ is the probability of setting the speed to $j\in\S^-\cup\S^+$ at the next control instant given that the current speed is $i \in \S^+$. Because of time-scale separation, between two control instants in the normal phase, given the server-speed to be $i\in\S^+$, the queue-length process is a stationary $M/M/1$ queue with arrival rate $\lambda$ and service rate $i\mu$. Thus,
\begin{equation}
    M_{i,j} = 
    \begin{cases}
    (1-\rho_i)\rho_i^j, & j \neq \sm;\\
    \rho_i^{\sm}, & j=\sm.
    \end{cases}
\end{equation}
with $\rho_i = \lambda/(i\mu)$. The vector $\mathbf{b} = [0,0,\hdots, 0, 1]$ since from $\sm_f$ the chain can only go to $\sm$ (at the additional embedded time instant when the fluid hits 0).

Let $\psi$ be the stationary distribution of $\hat{S}_n$. Then, by the renewal reward theorem, the marginal distribution of the speed in the limit $\nu\to 0$ can be written in terms of $\psi$ as follows: 
\begin{equation}
    \sigma_j = \frac{\tau_j\psi_j}{\sum_{k\in\S\cup\{\sm_f\}} \tau_k\psi_k}, \; j \in \S\cup\{\sm_f\},
    \label{eqn:sig_finm}
\end{equation}
where $\tau_j$ is the average time spent in state $j$ before the next jump. We remind the reader that the state $\sm$ has been split in two: $\sm_f$ refers to the maximum speed when the queue-length is on the fluid scale while $\sm$ refers to maximum speed when the queue-length is on the normal scale.  Note that for all states except $\sm_f$, the jumps occur after an exponentially distributed time of rate $1$. The only unknown quantity is thus $\tau_{\sm_f}$, which can be written as
\begin{equation}
    \tau_{\sm_f} = \frac{\sum_{j\in\S^{-}}\tau_{\sm_f,j}\psi_j}{\sum_{j\in\S^-}\psi_j}.
\end{equation}
where $\tau_{\sm_f,j}$ is the expected time spent in $\sm_f$ conditioned on the speed being $j$ in the preceding \fu\ phase. It can be seen that, given speed $j$ in the \fu\ phase, the amount of fluid at the end of this phase is distributed as $\exp((\lambda-j\mu)^{-1})$. The expected amount of fluid at the start of \fs\ phase is thus $\lambda - j\mu$. The speed in the \fs\ phase being $\sm$, the expected time it takes to drain this fluid is 
\begin{equation}
    \tau_{\sm_f,j} = \frac{\lambda - j\mu}{\sm\mu - \lambda}.
\end{equation}

Let us decompose the vector $\psi$ in the same manner as the matrix $M$: $\psi=[\psi^-,\psi^+,\psi_{\sm_f}]$. The component $\psi^{-}$ thus contains the stationary probabilities of the states in $\S^-$. The other components are similarly defined. The following result presents formulas for $\psi$ and $\tau_{\sm_f}$. Its proof is by a simple check and is therefore  omitted.
   
\begin{proposition}
Let $\theta = \mathbf{b}(I-V)^{-1}$ and $\kappa = \theta\cdot\mathbf{1}^T$. Then,
\begin{align}
    \psi^- = \frac{\theta V_0}{2+\kappa},\quad \psi^+ = \frac{\theta}{2 + \kappa}, \quad \bar{\psi}_{\sm_f} = \frac{1}{2+\kappa},
    \label{eqn:psi}
\end{align}
and
\begin{equation}
\label{eqn:tausmf}
    \tau_{\sm_f} = \frac{1}{\sm\mu-\lambda}\left(\lambda - \theta V_0 J^T\mu\right)
\end{equation}
with $J=[0,1,\hdots,\lvert \S^-\rvert]$.
\end{proposition}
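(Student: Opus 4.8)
The plan is to verify directly that the vector $\psi = [\psi^-,\psi^+,\psi_{\sm_f}]$ given in \eqref{eqn:psi} solves the stationarity equation $\psi M = \psi$ together with the normalization $\psi\mathbf{1}^T = 1$, and then to substitute into the definition of $\tau_{\sm_f}$. First I would read the three block balance equations off the block structure of $M$. The only transitions into $\S^-$ come from $\S^+$ via $V_0$; into $\S^+$ they come from $\S^+$ via $V$ and from $\sm_f$ via $\mathbf{b}$; and into $\sm_f$ they come only from $\S^-$, each such state moving to $\sm_f$ with probability $1$. Hence
\begin{align}
\psi^- = \psi^+ V_0, \qquad \psi^+ = \psi^+ V + \psi_{\sm_f}\,\mathbf{b}, \qquad \psi_{\sm_f} = \psi^-\mathbf{1}^T. \nonumber
\end{align}

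The middle equation gives $\psi^+(I-V) = \psi_{\sm_f}\,\mathbf{b}$. Here $I-V$ is invertible because $V$ is strictly substochastic: from every $i\in\S^+$ there is positive probability $(1-\rho_i)>0$ of jumping to state $0\in\S^-$, so each row sum of $V$ is strictly below $1$ and the spectral radius of $V$ is less than $1$. Thus $\psi^+ = \psi_{\sm_f}\,\mathbf{b}(I-V)^{-1} = \psi_{\sm_f}\,\theta$, and the first equation then gives $\psi^- = \psi_{\sm_f}\,\theta V_0$, matching \eqref{eqn:psi} up to the scalar $\psi_{\sm_f}$.

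The step that needs a genuine (if short) argument is that the third balance equation is then automatically satisfied, i.e. that $\theta V_0\mathbf{1}^T = 1$. I would obtain this from $\theta(I-V)=\mathbf{b}$: multiplying on the right by $\mathbf{1}^T$ gives $\theta\mathbf{1}^T - \theta V\mathbf{1}^T = \mathbf{b}\mathbf{1}^T = 1$, and substituting $V\mathbf{1}^T = \mathbf{1}^T - V_0\mathbf{1}^T$ (valid because each row of $[V_0\;V]$ is a probability vector) collapses this to $\theta V_0\mathbf{1}^T = 1$. Probabilistically, $\theta V_0\mathbf{1}^T$ is just the total probability that an excursion in $\S^+$ eventually exits to $\S^-$, which equals $1$. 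With this in hand the normalization $\psi^-\mathbf{1}^T + \psi^+\mathbf{1}^T + \psi_{\sm_f} = 1$ becomes $\psi_{\sm_f}(1+\kappa+1)=1$, using $\psi^-\mathbf{1}^T = \psi_{\sm_f}\,\theta V_0\mathbf{1}^T = \psi_{\sm_f}$ and $\psi^+\mathbf{1}^T = \psi_{\sm_f}\,\kappa$, whence $\psi_{\sm_f} = 1/(2+\kappa)$ and the three formulas in \eqref{eqn:psi} follow.

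Finally I would verify \eqref{eqn:tausmf} by plugging $\tau_{\sm_f,j} = (\lambda - j\mu)/(\sm\mu-\lambda)$ and the now-known $\psi^-$ into the given expression $\tau_{\sm_f} = \big(\sum_{j\in\S^-}\tau_{\sm_f,j}\psi^-_j\big)\big/\big(\sum_{j\in\S^-}\psi^-_j\big)$. Writing $\sum_{j\in\S^-} j\,\psi^-_j = \psi^- J^T$ and $\sum_{j\in\S^-}\psi^-_j = \psi^-\mathbf{1}^T$, this equals $\frac{1}{\sm\mu-\lambda}\big(\lambda - \mu\,\psi^- J^T/\psi^-\mathbf{1}^T\big)$. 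Since $\psi^- = \theta V_0/(2+\kappa)$ and $\theta V_0\mathbf{1}^T = 1$, the $(2+\kappa)$ factors cancel and the ratio $\psi^- J^T/\psi^-\mathbf{1}^T$ simplifies to $\theta V_0 J^T$, giving $\tau_{\sm_f} = (\lambda - \theta V_0 J^T\mu)/(\sm\mu-\lambda)$, as claimed. No single step is hard; the only point one must get right is the consistency identity $\theta V_0\mathbf{1}^T = 1$, which is exactly what makes the three balance equations compatible and pins down the normalizing constant $2+\kappa$.
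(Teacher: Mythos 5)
Your proof is correct and matches the paper's intended argument: the paper omits the proof as ``a simple check,'' and your verification of the block balance equations $\psi M = \psi$ together with normalization, followed by substitution into the definition of $\tau_{\sm_f}$, is exactly that check. In particular, you correctly isolate and establish the one step that makes the three balance equations compatible, namely the consistency identity $\theta V_0 \mathbf{1}^T = 1$.
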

Intuitively, $\theta$ computes the number of visits of $\hat{S}_n$ to states in $\S^+$ starting in state $\sm_f$ while remaining in $\S^+$.  Upon leaving $\S^+$, $\hat{S}_n$  enters states in $\S^-$ with probability distribution $\theta V_0$, initiating a \fu\ phase. It leaves $\S^-$ after one epoch (corresponding to the first next control instant) and visits $\sm_f$ where it stays for one epoch before entering $\S^+$ through the state $\sm$ (which corresponds to the fluid hitting 0).

We now have all the ingredients to compute the generating function $P(x,y)$. Define $\hat{P}(x,y) = \lim_{\nu\to 0} P(x^\nu, y)$ for the generating function that captures the queue-length on the
fluid scale and $\tilde{P}(x,y) = \lim_{\nu\to 0} P(x,y)$ to be the one that captures the normal scale queue-length. 
From \eqref{eqn:altP} and \eqref{eqn:sig_finm}, we get
\begin{align}
    \hat{P}(x,y) &= \sum_{j\in\S^-}\frac{\sigma_j}{1-(\lambda-j\mu)\log(x)}\left(y^j + \frac{\lambda-j\mu}{\sm\mu-\lambda}y^{\sm}\right)+ \sum_{j\in\S^+}y^j\sigma_j,\label{eqn:hP}\\
    \tilde{P}(x,y) &=\sum_{j\in\S^+}\frac{1-\rho_j}{1-\rho_jx}y^j\sigma_j.
    \label{eqn:tP}
\end{align}
On the fluid scale, conditional on speed being $j\in\S^-$, the queue-length is an exponential random variable of rate $\lambda-j\mu$. This is also true in the \text{fluid stable} phase. The main difference between the two phases is the time spent in there. Since for a \textit{fluid unstable} phase of length $1$, the corresponding length of the \textit{fluid stable} phase is $\tau_{\sm_f}$, we get the terms for $j\in\S^-$ in \eqref{eqn:hP}. The terms for $\S^+$ are a direct consequence of the fact that on the fluid scale, a stable queue is always of length $0$ and only the marginal distribution of the speeds appears. On the normal scale, the queue-length is just a stationary $M/M/1$ queue which gives the coefficient of $y^j\sigma_j$ in \eqref{eqn:tP}.

\begin{remark}
The above asymptotic analysis cannot be used to obtain the results in Sec. \ref{ssec:asymp} for the infinite speed case by taking $\sm \to \infty$. The difficulty comes from the fact that when $\sm = \infty$, the speed sampled after the fluid (unstable) phase is an exponential random variable. On the other hand, in the finite $\sm$ case, this speed is always a fixed value. The limit of this sequence of deterministic values is unable to capture the exponential random variable at $\sm=\infty$. One will have to scale $\sm$ as $\nu^{-1}$ to get back the results of Sec. \ref{ssec:asymp}.  
\end{remark}

Next we apply the above arguments to the special case $\sm=1$ and show that we indeed obtain the same expression for the generating function as the one in \eqref{eqn:gf_s1} with $\nu\to 0$.

For $\sm=1$, the transition matrix of the embedded chain becomes
\begin{align}
M =
 \begin{blockarray}{cccc}
 & 0 & 1 & 1_f\\
 \begin{block}{c[ccc]}
 0 & 0 & 0 & 1\\
 1 & 1-\rho & \rho & 0\\
 1_f & 0 & 1 & 0\\
 \end{block}
 \end{blockarray},		
\end{align}
for which $\psi = (1+2(1-\rho))^{-1}[1-\rho, 1, 1-\rho]$, and $\tau_{1_f} = \rho(1-\rho)^{-1}$. The marginal distribution of the speed is thus
\begin{equation}
    \sigma = \left[\frac{1-\rho}{2}, \frac{1}{2}, \frac{\rho}{2}\right].
    \label{eqn:sigma_s1}
\end{equation}

We now check that the above values are consistent with \eqref{eqn:gf_s1} with $\nu\to 0$. Setting $x$ to $x^\nu$ and taking the limit $\nu\to 0$ in \eqref{eqn:gf_s1}, we get
\begin{equation}
\hP(x,y) = \tfrac{1-\rho}{2} \cdot \tfrac{1}{1-\lambda \log(x)} + \tfrac{\rho}{2} \cdot \tfrac{1}{1-\lambda \log(x)} \cdot y + \tfrac12 \cdot y,
\end{equation}
which is the same expression as the one when \eqref{eqn:sigma_s1} is substituted in \eqref{eqn:hP}.

The queue length process is with probability $\tfrac{1-\rho}{2}$ on the fluid scale with speed $0$, it is with probability $\tfrac{\rho}{2}$ on the fluid scale with speed $1$, and it is with probability $\tfrac12$ on the normal scale with speed 1. Furthermore, remark that when the queue length process is on the normal scale, the number of customers in the system is geometrically distributed with parameter $\rho$ and hence the probability that the server is working on this scale is $\rho$. In total, the fraction of time the server is working is $\tfrac{1-\rho}{2} \cdot 0 + \tfrac{\rho}{2} \cdot 1 + \tfrac12 \cdot \rho = \rho$ as it should be.

\section{Other models with transitions to diagonal states}
In this section, we analyze two models where the controller is just an observer and does not modify the speed of the server. 
\subsection{The $M/M/1$ queue with a Poisson observer}
Consider the $M/M/1$ queue with arrival rate $\lambda$, exponentially distributed service times with parameter $\mu$ and with an observer
arriving to the system according to a Poisson process with
rate $\nu$. The server always works with speed 1 (so no speed adaptations will occur) and assume $\lambda < \mu$.

In this case, the  process $(Q(t),S(t))_{t\geq 0}$ is a Markov process with transitions
\begin{equation}
(Q(t), S(t)) \to \left\{ \begin{array}{lcl}
			(Q(t) + 1, S(t)) & \mbox{with rate} & \lambda; \\
			(Q(t) - 1, S(t)) & \mbox{with rate} & \mu; \\
			(Q(t), Q(t)) & \mbox{with rate} & \nu. \\
			\end{array}\right.
\end{equation}
Figure \ref{fig:transition_diag_3} shows the rate diagram of the $M/M/1$ queue with Poisson observer.
 \begin{figure}[!htb]
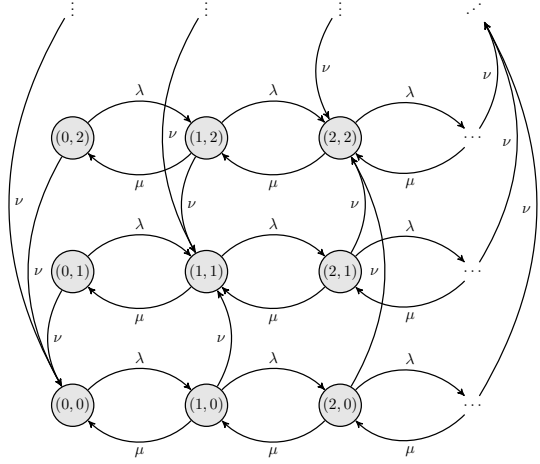

  \centering
  \includestandalone[height=0.5\textwidth]{trans_graph_bd_3}
    \caption{Rate diagram of the $M/M/1$ queue with Poisson observer.}
    \label{fig:transition_diag_3}
  \end{figure}

The balance equations for the stationary probabilities are
\begin{align}
	(\lambda + \mu + \nu)\pi_{i,j} &= \lambda\pi_{i-1,j} + \mu \pi_{i+1,j} + \1_{\{j=i\}}\nu\sum_{k}\pi_{i,k},  \, i \geq 1, j\geq 0; \nonumber\\
	(\lambda + \nu)\pi_{0,j} &= \mu \pi_{1,j} + \1_{\{j=0\}}\nu\sum_{k}\pi_{0,k},  \, j\geq 0, \nonumber
\end{align}

\begin{proposition}
Generating function $P(x,y)$ satisfies the functional equation
\begin{equation}
	\left[\nu + \lambda(1-x)+ \mu\left(1-\tfrac{1}{x}\right)\right]P(x,y) =  \mu \left(1-\tfrac{1}{x}\right) P(0,y) + \nu P(xy,1).
\label{eqn:gen_fun_3}
\end{equation}

The solution of functional equation (\ref{eqn:gen_fun_3}) is given by
\begin{equation}
P(x,y)= \frac{\frac{1-\tfrac{1}{x}}{\tfrac{1}{x_1}-1}\nu \frac{\mu-\lambda}{\mu - \lambda x_1 y}+ \nu \frac{\mu-\lambda}{\mu - \lambda x y}}{\nu + \lambda(1-x)+ \mu\left(1-\tfrac{1}{x}\right)}
\label{eqn:sol_gen_fun_3}
\end{equation}
with $0 < x_1 < 1$ solution of the equation
$\lambda x^2 - \left(\lambda + \mu + \nu\right)x+ \mu=0$.
\label{prop:sol_gen_fun_3}
\end{proposition}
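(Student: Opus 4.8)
The statement has two parts: establishing the functional equation \eqref{eqn:gen_fun_3} and then solving it. For the first part the plan is to mimic the proof of Proposition~\ref{prop:gen_fun} almost verbatim. Multiplying each balance equation by $x^i y^j$ and summing over $i\ge 0$, $j\ge 0$, the arrival terms assemble into $\lambda x\,P(x,y)$, the departure terms into $\tfrac{\mu}{x}\bigl[P(x,y)-P(0,y)\bigr]$ (with the $i=0$ equation, whose coefficient is $\lambda+\nu$ rather than $\lambda+\mu+\nu$, accounting for the boundary adjustment), and the observation terms, where the indicator $\1_{\{j=i\}}$ forces the recorded speed to equal the current queue length, collapse to $\nu\sum_{i\ge 0}\bigl(\sum_k\pi_{i,k}\bigr)(xy)^i=\nu P(xy,1)$. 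Collecting the coefficients of $P(x,y)$ produces the kernel $\nu+\lambda(1-x)+\mu\bigl(1-\tfrac1x\bigr)$, and rearranging gives \eqref{eqn:gen_fun_3}.

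For the solution I would use the kernel method. Write $K(x)=\nu+\lambda(1-x)+\mu\bigl(1-\tfrac1x\bigr)$, so that $xK(x)=-\bigl[\lambda x^2-(\lambda+\mu+\nu)x+\mu\bigr]=-\lambda(x-x_1)(x-x_2)$, where $x_1,x_2$ are the roots of $\lambda x^2-(\lambda+\mu+\nu)x+\mu=0$. Since this polynomial equals $\mu>0$ at $x=0$, equals $-\nu<0$ at $x=1$, and tends to $+\infty$, exactly one root $x_1$ lies in $(0,1)$ while the other, $x_2=\mu/(\lambda x_1)$, exceeds $1$ (here $\lambda<\mu$ is used). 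The first step is to extract the marginal law of $Q$ by setting $y=1$ in \eqref{eqn:gen_fun_3}: the $\nu P(x,1)$ term cancels, and after multiplying by $x$ and dividing out the common factor $(1-x)$ one obtains $(\mu-\lambda x)P(x,1)=\mu P(0,1)$; the normalization $P(1,1)=1$ fixes $P(0,1)=(\mu-\lambda)/\mu$, so that $P(x,1)=(\mu-\lambda)/(\mu-\lambda x)$ is the usual $M/M/1$ geometric marginal and $P(xy,1)=(\mu-\lambda)/(\mu-\lambda xy)$.

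The remaining unknown is the boundary function $P(0,y)$, which I would pin down from analyticity. Multiplying \eqref{eqn:gen_fun_3} by $x$ gives
\[
-\lambda(x-x_1)(x-x_2)\,P(x,y)=\mu(x-1)P(0,y)+\nu x\,P(xy,1).
\]
Because $P(\cdot,y)$ is a probability generating function it is analytic on $|x|\le 1$; evaluating at the interior root $x=x_1$ forces the right-hand side to vanish, whence
\[
P(0,y)=\frac{\nu x_1\,P(x_1 y,1)}{\mu(1-x_1)}=\frac{\nu}{\mu\bigl(\tfrac1{x_1}-1\bigr)}\cdot\frac{\mu-\lambda}{\mu-\lambda x_1 y}.
\]
Substituting this together with $P(xy,1)=(\mu-\lambda)/(\mu-\lambda xy)$ back into \eqref{eqn:gen_fun_3} and isolating $P(x,y)$ then reproduces \eqref{eqn:sol_gen_fun_3}.

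I expect the only delicate point to be the analyticity argument rather than any algebra. One must confirm that $x_1$ lies strictly inside the unit disk (done above from the signs of the kernel polynomial at $0$ and $1$) and that the resulting formula is a bona fide generating function. The latter reduces to checking that the apparent pole of \eqref{eqn:sol_gen_fun_3} at $x=x_1$ is removable: at $x=x_1$ the factor $(1-\tfrac1x)/(\tfrac1{x_1}-1)$ equals $-1$, so the two numerator terms cancel exactly, while the only genuine pole sits at $x_2>1$, outside the disk. Uniqueness is not an extra burden: since the $Q$-coordinate is a stable $M/M/1$ queue and observations occur at positive rate, the chain is ergodic and its stationary generating function is unique, so the constructed $P(x,y)$ is \emph{the} solution.
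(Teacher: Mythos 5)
Your proposal is correct and follows essentially the same route as the paper: derive the functional equation by summing the balance equations against $x^i y^j$, use $P(xy,1)=\tfrac{\mu-\lambda}{\mu-\lambda xy}$, and evaluate at the kernel root $x_1\in(0,1)$ to identify $P(0,y)$ before substituting back. Your additional details (deriving the $M/M/1$ marginal from the equation itself rather than citing it, locating the roots, checking removability of the pole at $x_1$, and the uniqueness remark) are sound elaborations of the paper's terser argument, not a different method.
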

\begin{proof}
Multiplying both sides of the rate balance equations by $x^iy^j$ and summing
over all possible $i$ and $j$ immediately leads to the equation
\begin{eqnarray*}
&&	(\nu + \lambda)P(x,y) + \mu \left[P(x,y) - P(0,y)\right] \\
&&= \lambda x P(x,y) + \frac{\mu}{x} \left[P(x,y) - P(0,y)\right] + \nu P(xy,1),
\end{eqnarray*}
which can be alternatively written as (\ref{eqn:gen_fun_3}).
The proof of (\ref{eqn:sol_gen_fun_3}) makes use of the fact that
$P(xy,1) = \tfrac{\mu-\lambda}{\mu - \lambda x y}$ and the fact that if
$0 < x_1 < 1$ is a solution of the equation
$\lambda x^2 - \left(\lambda + \mu + \nu\right)x+ \mu=0$ then
\[
\mu \left(1-\tfrac{1}{x_1}\right) P(0,y) + \nu \tfrac{\mu-\lambda}{\mu - \lambda x_1 y} =0.
\]
\end{proof}

\begin{corollary}[Asymptotics when $\nu \to \infty$]
From (\ref{eqn:sol_gen_fun_3}) and the fact that $x_1 \to 0$ if $\nu \to \infty$, it immediately follows that $P(x,y) \to \tfrac{\mu-\lambda}{\mu - \lambda x y}$ when $\nu\to \infty$.
\end{corollary}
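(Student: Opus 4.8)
The plan is to extract the limit directly from the closed form \eqref{eqn:sol_gen_fun_3}; the only input that requires any thought is the asymptotics of the root $x_1$. First I would analyse the quadratic $\lambda x^2 - (\lambda+\mu+\nu)x + \mu = 0$, whose two roots $x_1 < 1 < x_2$ satisfy, by Vieta's formulas, $x_1 x_2 = \mu/\lambda$ and $x_1 + x_2 = (\lambda+\mu+\nu)/\lambda$. Since the product is held fixed while the sum diverges, exactly one root (namely $x_2$) must tend to $+\infty$, which forces $x_1 = (\mu/\lambda)/x_2 \to 0$; more precisely $x_1 \sim \mu/\nu$. This is the fact quoted in the statement, and it is the only place where the hypothesis $\nu \to \infty$ genuinely enters.

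With $x_1 \to 0$ in hand, I would divide both numerator and denominator of \eqref{eqn:sol_gen_fun_3} by $\nu$ and pass to the limit factor by factor, treating $x$ and $y$ as fixed with $|x|,|y| \le 1$ (so that, using $\lambda < \mu$, no denominator $\mu - \lambda x y$ vanishes). The denominator becomes $1 + \nu^{-1}\left[\lambda(1-x) + \mu(1 - 1/x)\right] \to 1$. In the numerator the second term is already $(\mu-\lambda)/(\mu - \lambda x y)$, independent of $\nu$, and supplies the claimed answer. The first term, after division by $\nu$, equals $\frac{1 - 1/x}{1/x_1 - 1} \cdot \frac{\mu-\lambda}{\mu - \lambda x_1 y}$; here the second factor tends to the bounded value $(\mu-\lambda)/\mu$ while $1/x_1 - 1 \to \infty$, so the whole first term vanishes. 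Collecting the pieces gives $P(x,y) \to (\mu-\lambda)/(\mu - \lambda x y)$.

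There is essentially no obstacle beyond bookkeeping: the one substantive step is justifying $x_1 \to 0$, after which every factor converges and nothing blows up, the key point being that $\mu - \lambda x_1 y \to \mu \neq 0$. As a consistency check, the limiting generating function is $\sum_{i\geq 0}(1-\rho)\rho^i (xy)^i$, i.e.\ the mass concentrates on the diagonal states $(i,i)$ with $Q$ geometric of parameter $\rho$. This matches the intuition that an infinitely fast observer always records the current queue length, so that $S \equiv Q$ and the marginal of $Q$ reduces to that of the underlying $M/M/1$ queue.
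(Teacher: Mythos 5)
Your proposal is correct and follows essentially the same route as the paper, which treats the corollary as immediate from the closed form \eqref{eqn:sol_gen_fun_3} once $x_1 \to 0$ is known; you simply supply the details the paper leaves implicit, namely the Vieta's-formulas argument for $x_1 \sim \mu/\nu \to 0$ and the division-by-$\nu$ bookkeeping showing the first numerator term vanishes while the second survives.
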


\begin{corollary}[Asymptotics when $\nu \to 0$]
From (\ref{eqn:sol_gen_fun_3}) and the fact that $x_1 \to 1$ and $\nu /(\tfrac{1}{x_1}-1) \to \mu-\lambda$, if $\nu \to 0$, it immediately follows that $P(x,y) \to \tfrac{\mu-\lambda}{\mu - \lambda x}\cdot \tfrac{\mu-\lambda}{\mu - \lambda y}$ when $\nu \to 0$.
\end{corollary}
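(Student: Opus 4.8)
The plan is to prove the corollary as a termwise limit in the explicit expression \eqref{eqn:sol_gen_fun_3}, once the asymptotics of the root $x_1=x_1(\nu)$ have been pinned down; I would isolate the two stated facts about $x_1$ as the only inputs that require real work. First I would justify $x_1\to1$. The quantity $x_1$ is the root in $(0,1)$ of $\lambda x^2-(\lambda+\mu+\nu)x+\mu=0$, whose coefficients depend continuously on $\nu$. At $\nu=0$ this polynomial factors as $(\lambda x-\mu)(x-1)$, with roots $1$ and $\mu/\lambda$; since $\lambda<\mu$ the second root exceeds $1$, so the root lying in $(0,1)$ for $\nu>0$ must approach $1$ as $\nu\to0$ by continuity of the roots in the coefficients.

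For the quantitative fact I would write $x_1=1-\varepsilon$ with $\varepsilon=\varepsilon(\nu)\to0$, substitute into the quadratic, and collect powers of $\varepsilon$, obtaining $\varepsilon(\mu-\lambda+\nu)+\lambda\varepsilon^2=\nu$; hence $\varepsilon\sim\nu/(\mu-\lambda)$ to leading order, and $\nu/(\tfrac{1}{x_1}-1)=\nu x_1/(1-x_1)=\nu x_1/\varepsilon\to\mu-\lambda$. With both facts available I would pass to the limit in \eqref{eqn:sol_gen_fun_3}. The denominator tends to the nonzero expression $\lambda(1-x)+\mu(1-\tfrac1x)=(1-x)(\lambda x-\mu)/x$. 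In the numerator the second term $\nu(\mu-\lambda)/(\mu-\lambda xy)$ vanishes, while the first term survives precisely because the factor $\nu/(\tfrac{1}{x_1}-1)$ tends to $\mu-\lambda$ rather than to $0$; combined with $x_1\to1$ it converges to $(1-\tfrac1x)(\mu-\lambda)\cdot(\mu-\lambda)/(\mu-\lambda y)$. Dividing numerator by denominator, the common factor $1-\tfrac1x=-(1-x)/x$ cancels against the $(1-x)/x$ in the denominator, and after using $-1/(\lambda x-\mu)=1/(\mu-\lambda x)$ one reads off the product $\tfrac{\mu-\lambda}{\mu-\lambda x}\cdot\tfrac{\mu-\lambda}{\mu-\lambda y}$.

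The hard part will be the rate $\nu/(\tfrac{1}{x_1}-1)\to\mu-\lambda$; everything after that is an algebraic limit. To make the leading-order expansion rigorous I would invoke the implicit function theorem for $F(\nu,x)=\lambda x^2-(\lambda+\mu+\nu)x+\mu$ at $(\nu,x)=(0,1)$, where $F=0$ and $\partial_x F=\lambda-\mu\neq0$, giving $x_1(\nu)=1-\nu/(\mu-\lambda)+o(\nu)$ and thus the claimed limit. I would also note that the termwise cancellation is carried out for $x\neq1$ (and for $x,y$ small enough that the geometric factors are finite), with the value at $x=1$ recovered by continuity of $P$. As a consistency check, the limit factorizes into two independent $M/M/1$ stationary generating functions $\tfrac{\mu-\lambda}{\mu-\lambda x}$ and $\tfrac{\mu-\lambda}{\mu-\lambda y}$, matching the intuition that, as observations become infinitely rare, the current queue length $Q$ and the last observed value $S$ decouple and each is a stationary $M/M/1$ queue length.
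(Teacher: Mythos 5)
Your proposal is correct and follows exactly the route the paper takes: establish $x_1 \to 1$ and $\nu/(\tfrac{1}{x_1}-1) \to \mu-\lambda$ for the root of $\lambda x^2-(\lambda+\mu+\nu)x+\mu=0$, then pass to the limit termwise in \eqref{eqn:sol_gen_fun_3}, cancelling the factor $(1-x)$ between numerator and denominator. The paper treats these steps as immediate and leaves them unproved, so your contribution is simply to supply the details (the expansion $x_1 = 1-\nu/(\mu-\lambda)+o(\nu)$ via the quadratic, and the explicit algebraic cancellation), all of which check out.
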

The above results can be explained intuitively by the fact that the observer does not influence the queue-length.  In the limit $\nu\to 0$, it just samples from the stationary distribution of an $M/M/1$ queue. Hence, in stationarity, the $S$ and $Q$ processes are distributionally equivalent to two independent $M/M/1$ queues.

\begin{remark}
The joint steady-state distribution of $(Q_t,S_t)$ is equal to the joint steady-state distribution of $(Q_{\tau_t+Y},Q_{\tau_t})$, where the random time $Y=t-\tau_t$ is exponentially distributed with parameter $\nu$. So, $P(x,y)$ is the probability generating function of the joint steady-state distribution of the queue length in an $M/M/1$ queue at two different time points, where the distance between the two time points is exponentially distributed with parameter $\nu$. Formula  (\ref{eqn:sol_gen_fun_3}) can be alternatively derived using results on the transient distribution in the $M/M/1$ queue (see Section I.4.4 and in particular formula (4.27) in \citet{cohen1982}).
\end{remark}

 \subsection{The $M/M/\infty$ queue with an observer}
Next, we consider the infinite server model with an observer arriving to the system according to a Poisson process with
rate $\nu$. As before, customers arrive according to a Poisson process with rate $\lambda$ and they require exponentially
distributed service times with parameter $\mu$.

The  process $(Q(t),S(t))_{t\geq 0}$ is a Markov process with transitions
\begin{equation}
(Q(t), S(t)) \to \left\{ \begin{array}{lcl}
			(Q(t) + 1, S(t)) & \mbox{with rate} & \lambda; \\
			(Q(t) - 1, S(t)) & \mbox{with rate} & \mu Q(t); \\
			(Q(t), Q(t)) & \mbox{with rate} & \nu. \\
			\end{array}\right.
\end{equation}

Figure \ref{fig:transition_diag_2} shows the rate diagram of the $M/M/\infty$ queue with Poisson observer.
 \begin{figure}[!htb]
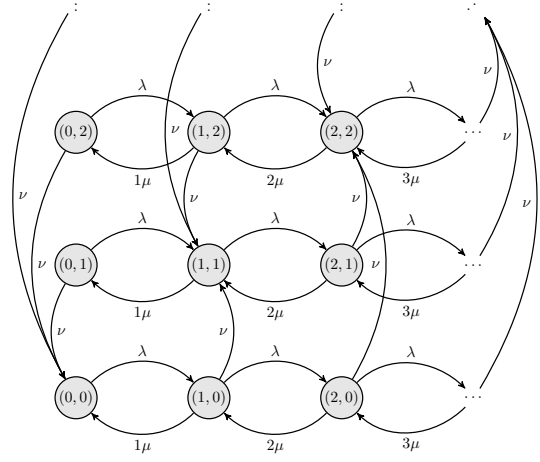

  \centering
  \includestandalone[height=0.5\textwidth]{trans_graph_bd_2}
    \caption{Rate diagram of the $M/M/\infty$ queue with Poisson observer.}
    \label{fig:transition_diag_2}
  \end{figure}

The balance equations for the stationary probabilities are
\begin{align}
	(\lambda + i\mu + \nu)\pi_{i,j} &= \lambda\pi_{i-1,j} + (i+1)\mu \pi_{i+1,j} + \1_{\{i=j\}}\nu\sum_{k}\pi_{j,k},  \,\, i \geq 1, \,\, j\geq 0; \nonumber \\
	(\lambda + \nu)\pi_{i,j} &= (i+1)\mu \pi_{i+1,j} + \1_{\{i=j\}}\nu\sum_{k}\pi_{j,k},  \,\, i = 0, \,\, j\geq 0, \nonumber
\end{align}

\begin{proposition}
Generating function $P(x,y)$ is solution of the functional equation
\begin{equation}
	(\nu + \lambda(1-x))P(x,y) + \mu (x-1) \px P(x,y) = \nu P(xy,1).
\label{eqn:gen_fun_2}
\end{equation}

The solution of functional equation (\ref{eqn:gen_fun_2}) is given by
\begin{equation}
P(x,y) = e^{\rho(x-1)} \cdot e^{\rho(y-1)} \cdot \sum_{k=0}^{\infty} \frac{\nu}{\nu + k \mu} \frac{\rho^k}{k!} (x-1)^k (y-1)^k.
\label{eqn:sol_gen_fun_2}
\end{equation}
\label{prop:sol_gen_fun_2}
\end{proposition}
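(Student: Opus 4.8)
The task is to prove Proposition~\ref{prop:sol_gen_fun_2}: both that the generating function satisfies the functional equation \eqref{eqn:gen_fun_2}, and that \eqref{eqn:sol_gen_fun_2} is its solution. The derivation of the functional equation is routine and mirrors the earlier propositions: multiply both balance equations by $x^i y^j$, sum over $i \geq 0$ and $j \geq 0$, and recognize the generating-function identities. The only point of care is the service term: the $M/M/\infty$ departure rate $(i+1)\mu$ produces $\mu \sum_i (i+1)\pi_{i+1,j} x^i y^j = \mu \px P(x,y)$ on the gain side and $\mu \sum_i i\pi_{i,j} x^i y^j = \mu x \px P(x,y)$ on the loss side, so the service contributions combine to $\mu(x-1)\px P(x,y)$. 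The control term again yields $\nu P(xy,1)$ because setting $S = Q$ at the observation instants maps the mass at $(i,k)$ (summed over $k$) to the diagonal, contributing $\nu \sum_i (\sum_k \pi_{i,k}) (xy)^i = \nu P(xy,1)$. Collecting terms produces \eqref{eqn:gen_fun_2}.

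\textbf{Verifying the closed form.} For the solution, I would substitute the ansatz \eqref{eqn:sol_gen_fun_2} into \eqref{eqn:gen_fun_2} and check it satisfies the equation. First note that on the diagonal slice $y=1$ the product $e^{\rho(y-1)}\prod$-type sum collapses (every factor $(y-1)^k$ vanishes except $k=0$), giving $P(x,1) = e^{\rho(x-1)}$, the stationary $M/M/\infty$ marginal; consequently $P(xy,1) = e^{\rho(xy-1)}$, which is the clean right-hand side I must reproduce. Writing $u = \rho(x-1)$ and $v = \rho(y-1)$, the ansatz reads $P = e^{u+v}\sum_k \frac{\nu}{\nu+k\mu}\frac{1}{k!}\frac{u^k v^k}{\rho^k}$; it is convenient to track the action of $\px$, which sends $e^u \mapsto \rho e^u$ and lowers the power of $u$ by differentiating $u^k$, so $\mu(x-1)\px = \mu u \,(\partial_u$-type$)$ acting on each term. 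I expect the key algebraic mechanism to be that the factor $\frac{\nu}{\nu+k\mu}$ is exactly what makes a telescoping occur: the term $\mu(x-1)\px P$ shifts the summation index and the coefficient $\frac{k\mu}{\nu+k\mu}=1-\frac{\nu}{\nu+k\mu}$ recombines with $(\nu+\lambda(1-x))P$ so that all the off-diagonal ($k\geq 1$) contributions cancel in pairs, leaving precisely $\nu e^{\rho(xy-1)}$ on the right.

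\textbf{The main obstacle.} The genuine difficulty is not the verification but the \emph{derivation} of the closed form \eqref{eqn:sol_gen_fun_2} — finding the series in the first place. I would approach this by passing to the centered variables $u=\rho(x-1)$, $v=\rho(y-1)$, in which the $M/M/\infty$ generator becomes diagonal (Poisson--Charlier / Hermite structure), and expanding $P$ as $P(x,y)=e^{u+v}\sum_{k\geq 0} a_k \frac{(uv)^k}{k!\,\rho^k}$ with unknown coefficients $a_k$. Substituting this ansatz into \eqref{eqn:gen_fun_2} and matching coefficients of $(x-1)^k(y-1)^k$ should yield a first-order recursion of the form $(\nu+k\mu)a_k = \nu a_{k-1}$ (with $a_0=1$ forced by the normalization $P(1,1)=1$), whose solution is $a_k = \prod_{m=1}^{k}\frac{\nu}{\nu+m\mu}$. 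The subtlety I anticipate is confirming that the product structure collapses to the single factor $\frac{\nu}{\nu+k\mu}\frac{1}{k!}$ claimed in \eqref{eqn:sol_gen_fun_2} rather than a full product --- this requires that the recursion be $a_k=\frac{\nu}{\nu+k\mu}a_{k-1}$ only after the $e^{u+v}$ prefactor and the $(uv)^k/k!$ normalization are correctly separated out; getting the bookkeeping of the cross-derivative terms right, so that the diagonal-driving term $\nu P(xy,1)=\nu e^{u+v+uv/\rho}$ expands as $\nu e^{u+v}\sum_k \frac{(uv)^k}{k!\,\rho^k}$ and feeds the right inhomogeneity into each coefficient equation, is where the real care lies.
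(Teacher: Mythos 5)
Your derivation of the functional equation \eqref{eqn:gen_fun_2} is correct, and the verification half of your plan (substitute \eqref{eqn:sol_gen_fun_2}, note that at $y=1$ only the $k=0$ term survives so $P(xy,1)=e^{\rho(xy-1)}$, and check the equation) would indeed go through; that alone suffices to prove the proposition. The paper's own proof is essentially your ``derivation'' route carried out correctly: it writes $P(x,y)=h(x,y)\,e^{\rho(x-1)}e^{\rho(y-1)}$, uses $e^{\rho(xy-1)}=e^{\rho(x-1)(y-1)}\,e^{\rho(x-1)}e^{\rho(y-1)}$ so that \eqref{eqn:gen_fun_2} reduces to $\nu h+\mu(x-1)\px h=\nu e^{\rho(x-1)(y-1)}$, and then matches coefficients in the double expansion $h=\sum_{k,\ell}h_{k,\ell}(x-1)^k(y-1)^{\ell}$.

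However, the coefficient structure you anticipate is wrong, and this is a genuine gap rather than a bookkeeping subtlety. The operator $\mu(x-1)\px$ is \emph{diagonal} on the monomials $(x-1)^k$: it multiplies by $\mu k$ and does not shift the index. The only index-raising contribution, $\mu\rho\,(x-1)^{k+1}$ coming from $\px$ hitting the prefactor $e^{\rho(x-1)}$, cancels identically against the term $\lambda(1-x)P$ because $\lambda=\mu\rho$. Consequently the matching gives, for each pair $(k,\ell)$ separately,
\[
(\nu+\mu k)\,h_{k,\ell}=\nu\,\frac{\rho^k}{k!}\,\indi{k=\ell},
\]
i.e.\ in your normalization $(\nu+k\mu)a_k=\nu$: each coefficient is determined \emph{directly} by the corresponding coefficient of the inhomogeneity $\nu e^{\rho(x-1)(y-1)}$, with no coupling to $a_{k-1}$ and hence no recursion and no telescoping. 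Your proposed recursion $(\nu+k\mu)a_k=\nu a_{k-1}$ would instead give $a_k=\prod_{m=1}^{k}\frac{\nu}{\nu+m\mu}$, which is a different function and contradicts \eqref{eqn:sol_gen_fun_2}. The ``subtlety'' you flagged --- how the product is supposed to collapse to the single factor $\frac{\nu}{\nu+k\mu}$ --- is precisely this contradiction, and it cannot be repaired by separating out prefactors more carefully: the recursion itself is incorrect. Once you replace it by the direct equation above, your derivation coincides with the paper's proof.
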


\begin{proof}
Multiplying both sides of the rate balance equations by $x^iy^j$ and summing
over all possible $i$ and $j$ immediately leads to the equation
\[
	(\nu + \lambda)P(x,y) + \mu x \px P(x,y)
= \lambda x P(x,y) + \mu \px P(x,y)+ \nu P(xy,1),
\]
which can be alternatively written as (\ref{eqn:gen_fun_2}).

Now, use the fact that
\[
P(xy,1) = e^{\rho(xy-1)} = e^{\rho(x-1)(y-1)} \cdot e^{\rho(x-1)} \cdot e^{\rho(y-1)},\]
and introduce the function $h(x,y)$ via
\[
P(x,y) = h(x,y) \cdot  e^{\rho(x-1)} \cdot e^{\rho(y-1)}.
\]
Then, using (\ref{eqn:gen_fun_2}), we obtain
\begin{equation}
\nu h(x,y) + \mu (x-1) \px h(x,y) = \nu e^{\rho(x-1)(y-1)}.
\label{eqn:fun_h}
\end{equation}
If we now write
\[
h(x,y) = \sum_{k=0}^{\infty} \sum_{\ell=0}^{\infty} h_{k,\ell} (x-1)^k (y-1)^{\ell},
\]
then from (\ref{eqn:fun_h}) we obtain
\[
(\nu+\mu k) h_{k,\ell} = \left\{ \begin{array}{ll}
                                   \nu \frac{\rho^k}{k!} & \mbox{if $k=\ell$}, \\
                                   0             & \mbox{if $k \neq \ell$},
                                   \end{array}
                         \right.
\]
and hence
\[
h(x,y) = \sum_{k=0}^{\infty} \frac{\nu}{\nu + k \mu} \frac{\rho^k}{k!} (x-1)^k (y-1)^k.
\]
\end{proof}

\begin{corollary}[Asymptotics when $\nu \to \infty$ or $\nu \to 0$]
From (\ref{eqn:sol_gen_fun_2}) it follows that, if $\nu \to \infty$, then $P(x,y) \to e^{\rho(xy-1)}$ and, if $\nu \to 0$, then $P(x,y) \to e^{\rho(x-1)} \cdot e^{\rho(y-1)}$.
\end{corollary}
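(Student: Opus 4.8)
The plan is to work directly from the closed-form expression \eqref{eqn:sol_gen_fun_2}, handling both regimes by pushing the limit in $\nu$ inside the series. I would write $P(x,y) = e^{\rho(x-1)} e^{\rho(y-1)} \, g_\nu(x,y)$ with
\[
g_\nu(x,y) = \sum_{k=0}^{\infty} \frac{\nu}{\nu + k\mu} \frac{\rho^k}{k!} (x-1)^k (y-1)^k .
\]
Since the generating function is evaluated on $|x|\le 1$, $|y|\le 1$, each summand is bounded in absolute value by $\frac{\rho^k}{k!}|x-1|^k|y-1|^k \le \frac{\rho^k}{k!} 4^k$, and $\sum_k \frac{\rho^k}{k!} 4^k = e^{4\rho} < \infty$ independently of $\nu$. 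This uniform domination is what lets me interchange the limit and the sum in both cases by dominated convergence.

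For $\nu \to \infty$, every coefficient satisfies $\frac{\nu}{\nu + k\mu} \to 1$, so termwise $g_\nu(x,y) \to \sum_{k\ge 0} \frac{\rho^k}{k!}(x-1)^k(y-1)^k = e^{\rho(x-1)(y-1)}$. Combining the three exponential factors and using the algebraic identity $(x-1)+(y-1)+(x-1)(y-1) = xy-1$, I obtain $P(x,y) \to e^{\rho(xy-1)}$. For $\nu \to 0$, the coefficient $\frac{\nu}{\nu + k\mu}$ tends to $\indi{k=0}$, so all terms with $k\ge 1$ vanish in the limit while the $k=0$ term equals $1$; hence $g_\nu(x,y) \to 1$ and $P(x,y) \to e^{\rho(x-1)} e^{\rho(y-1)}$.

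The only step that is not pure bookkeeping is the justification of the termwise passage to the limit, which the uniform bound above disposes of cleanly; the remaining work is the exponent simplification and the recognition of the surviving series as an exponential. Both limits also agree with the intuitive picture: for $\nu\to\infty$ the observer samples so rapidly that $S$ tracks $Q$, concentrating the pair on the diagonal with the Poisson$(\rho)$ marginal of the $M/M/\infty$ queue, whereas for $\nu\to 0$ the observation becomes asymptotically independent of the current state, yielding the product of two independent Poisson$(\rho)$ generating functions.
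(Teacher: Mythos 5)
Your proof is correct and takes essentially the same approach as the paper, which simply reads both limits off the closed-form series \eqref{eqn:sol_gen_fun_2} without further comment. The details you add --- the uniform domination $\frac{\nu}{\nu+k\mu}\frac{\rho^k}{k!}|x-1|^k|y-1|^k \le \frac{(4\rho)^k}{k!}$ justifying the termwise limit, and the identity $(x-1)+(y-1)+(x-1)(y-1)=xy-1$ --- are exactly the bookkeeping the paper leaves implicit.
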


\begin{remark}
$P(x,y)$ is the probability generating function of the joint steady-state distribution of the queue length in an $M/M/\infty$ queue at two different time points, where the distance between the two time points is exponentially distributed with parameter $\nu$. Formula (\ref{eqn:sol_gen_fun_2}) can be alternatively derived using results on the transient distribution in the $M/M/\infty$ queue.
\end{remark}

\begin{remark}
An alternative expression for $P(x,y)$ is the following:
\begin{equation}
P(x,y)= e^{\rho(x-1)} \cdot e^{\rho(y-1)} \cdot
\int_{u=0}^1 \tfrac{\nu}{\mu} \, u^{\nu/\mu-1} \, e^{\rho u(y-1)(x-1)} \, du.
\label{eqn:alt_gen_fun_2}
\end{equation}
This can be derived from (\ref{eqn:gen_fun_2}) in the following way.
We have
\begin{eqnarray*}
\px \left( e^{-\rho x} (1-x)^{\nu/\mu} P(x,y) \right) &=& e^{-\rho x} (1-x)^{\nu/\mu} \left[ \px P(x,y) \right.\\
                &&\quad + \left.\frac{\nu + \lambda(1-x)}{\mu(x-1)} P(x,y)\right]\\
&=& e^{-\rho x} (1-x)^{\nu/\mu} \frac{\nu P(xy,1)}{\mu(x-1)} \\
&=& e^{-\rho x} (1-x)^{\nu/\mu} \frac{\nu}{\mu(x-1)} e^{\rho(xy-1)}.
\end{eqnarray*}
Hence,
\begin{equation}
e^{-\rho x} (1-x)^{\nu/\mu} P(x,y) = \int_{z=1}^x
\frac{\nu}{\mu(z-1)} (1-z)^{\nu/\mu}  e^{\rho(zy-1-z)} dz,
\end{equation}
which can alternatively be written as
\begin{equation}
 P(x,y) = e^{\rho (x-1)} \int_{z=1}^x
 \frac{\nu}{\mu(z-1)} \left(\frac{1-z}{1-x}\right)^{\nu/\mu} e^{\rho z(y-1)} dz.
\end{equation}
Now, substituting $u=\frac{z-1}{x-1}$ (and hence $z=(x-1)u+1$) yields (\ref{eqn:alt_gen_fun_2}).
\end{remark}

\section{Discussion}

We close the paper by briefly describing how our results help with decision-making in queueing systems. One application of this can be to model the trade-off between the monitoring costs and the suboptimality due to reduced monitoring. Control policies are sometimes obtained assuming changes can be made at arbitrary time instants (or equivalently assuming infinite monitoring frequency). In practice, however, monitoring the state incurs costs pushing one to monitor less frequently. Reducing the monitoring frequency lowers the measurement cost but also makes the policy more suboptimal. With the analysis in this paper, one can compute the performance obtained for a given monitoring frequency, $\nu$, and then determine the appropriate value of $\nu$ that optimizes the objective that accounts for both the performance as well as the monitoring costs.

 As an illustration, consider the problem of optimizing a linear combination of sojourn time and energy consumption that is mentioned in the introduction of the paper. The optimal policy is to set the speed to the number of jobs to the power of a coefficient (in this paper, we assume this coefficient to be $1$). In Fig. \ref{fig:eqs}, we plot the expected queue length as well as the expected speed as a function of $\nu$, the monitoring frequency, for $\bar{s} = 2$. It can be seen that both decrease with $\nu$ which points to a trade-off since a higher $\nu$ will entail a higher monitoring cost.

\begin{figure}[ht]
\label{fig:eqs}
\begin{center}
\centering
    \includegraphics[scale=0.7]{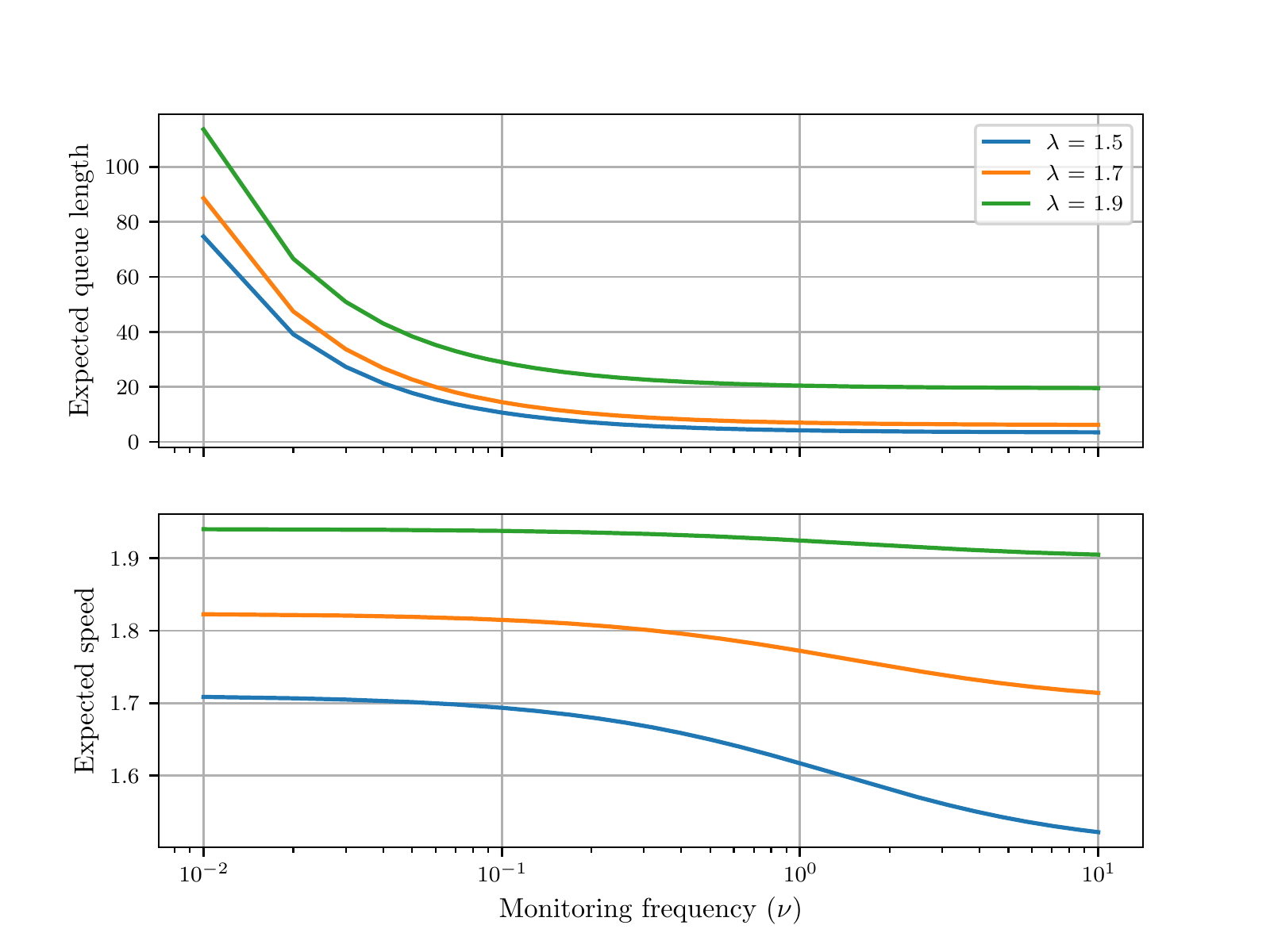}
\end{center}
\caption{Expected queue length and expected speed as functions of the monitoring frequency ($\nu$). $\bar{s} = 2$, $\mu = 1$.}
\end{figure}

We also remark that for the finite-speed case Prop \ref{prop:gf_fin} can easily be generalized to arbitrary increasing function speed profiles $s_i$, where $s_i$ is the value the speed is set to when state $i$ is observed.  For the above figures, it was assumed that $s_i = i\mu$.

\bibliographystyle{abbrvnat}
\bibliography{refs}
\end{document}